\documentclass{article}
\usepackage{graphicx}
\usepackage{amsmath, amsthm, amssymb}
\usepackage[bottom]{footmisc}
\usepackage{cite}
\usepackage{longtable}
\graphicspath{{figures/}}
\usepackage{multirow}
\usepackage{subfigure}
\usepackage{algorithm}
\usepackage{algorithmic}
\usepackage[dvipsnames]{xcolor}
\usepackage{tabularx}

\setlength{\paperheight}{11in} \setlength{\paperwidth}{8.5in}
\setlength{\textheight}{8.8in} \setlength{\textwidth}{6.5in}
\newlength{\totalhmargin}
\setlength{\totalhmargin}{\paperwidth}
\addtolength{\totalhmargin}{-1\textwidth}
\newlength{\totalvmargin}
\setlength{\totalvmargin}{\paperheight}
\setlength{\oddsidemargin}{0.5\totalhmargin}
\setlength{\evensidemargin}{\oddsidemargin}
\addtolength{\oddsidemargin}{-1in} \setlength{\topmargin}{-0.45in}

\def\R{\mathbb{R}}

\theoremstyle{definition}
\newtheorem{remark}{Remark}
\newtheorem{lemma}{Lemma}

\newtheorem{theorem}{Theorem}

\title{A Rigorous Mathematical Theory for Topological Phases and Edge Modes in Spring-mass Mechanical Systems}

\author{Ridvan Ozdemir
  \thanks{Department of Mathematics and Statistics, Auburn University, Auburn, AL 36849.  \tt rzo0012@auburn.edu.} \,\, and \, Junshan Lin
  \thanks{Department of Mathematics and Statistics, Auburn University, Auburn, AL 36849.  \tt jzl0097@auburn.edu.} }
  
 \date{}
 
\begin{document}
\newpage
\maketitle
\begin{abstract}
In this work, we examine the topological phases of the spring-mass lattices when the spatial inversion symmetry of the system is broken and prove the existence of edge modes when two  lattices with different topological phases are glued together. In particular, for the one-dimensional lattice consisting of an infinite array of masses connected by springs, we show that the Zak phase of the lattice is quantized, only taking the value $0$ or $\pi$. We also prove the existence of an edge mode when two semi-infinite lattices with distinct Zak phases are connected. For the two-dimensional honeycomb lattice, we characterize the valley Chern numbers of the lattice when the masses on the lattice vertices are uneven. The existence of edge modes is proved for a joint honeycomb lattice formed by gluing two semi-infinite lattices with opposite valley Chern numbers together.
\end{abstract}

\section{Introduction}
The recent development of topological insulators in condensed matter physics has opened up new avenues for localization and confinement of classical waves. In topological insulators, 
an insulating bulk electronic material can support localized edge states on its surface, and the existence of edge states is associated with the topological invariant of the bulk electron material \cite{kane2013topological}. The extension of concepts in topological insulators to classical waves was proposed in the seminal work \cite{raghu2008analogs}, where the topological phases in the electromagnetic wave systems were introduced using the wave functions in the momentum space. Since then extensive research has been devoted to control acoustic, electromagnetic and mechanical waves in the same way as solids modulating electrons in topological insulators \cite{khanikaev2013photonic,  lu2014topological, ma2019topological, ozawa2019topological, xiao2015geometric}.

There exist mainly two strategies to realize topological wave insulators for classical waves. The first strategy mimics the so-called quantum Hall effect in topological insulator using active components to break the time-reversal symmetry of the system. This is realized by moderating rotational motion of air  in acoustic media or applying the external magnetic field in electromagnetic media \cite{khanikaev2015topologically, nash2015topological,wang2009observation}.
The second strategy relies on an analogue of the quantum spin Hall effect or quantum valley Hall effect, and it uses passive components to break the inversion symmetry of the system \cite{lu2016valley,  ye2017observation, ma2016all,wu2015scheme}. In this work, we investigate the spring-mass topological mechanical systems using the second strategy. The inversion symmetry in each periodic cell of the system is broken by tuning either the mass parameter or the spring constant.
The setup of the topological mechanical material was introduced in \cite{pal2019topologically}, and our goal in this work is to provide a rigorous mathematical theory for the topological phases and edge modes in such mechanical systems. The spring-mass topological mechanical systems using the first strategy was realized in \cite{zheng2022toward}. The mathematical studies for the corresponding topological phases and edge modes will be forthcoming.

We examine topological mechanical systems in one and two dimensions. The periodic lattice in one dimension is constructed over the real line with identical masses, wherein each mass is connected by two springs with different spring constants. We derive the Zak phase of the lattice and show that its value is quantized when the spring constant varies, only taking the value $0$ or $\pi$. Additionally, we prove the existence of edge modes when two semi-infinite mechanical systems with different Zak phases are joined together.  In two dimensions, the periodic mechanical system is constructed over a honeycomb lattice, wherein each periodic cell consists of two different masses that are connected to the neighboring masses with the identical springs. We investigate the valley Chern number and examine how its value is related to the change of masses. Furthermore, we prove the existence of the edge modes in a joint mechanical system formed by two honeycomb lattices with opposite valley Chern numbers. We would like to refer the readers to the mathematical studies of edge modes in acoustic and electromagnetic waves in \cite{ammari2020topologically, ammari2022robust, drouot2020edge, fefferman2016edge, fefferman2014topologically, qiu2023mathematical, zhang2021superior}. In general, the number of edge modes is equal to the difference of the bulk topological invariants across the interface, which is known as the bulk-edge correspondence. We refer to \cite{bal2017topological, bal2019continuous, drouot2021microlocal,  elbau2002equality, elgart2005equality} 
for the bulk-edge correspondence in electron models for topological insulators and several elliptic partial differential equation models.

The rest of the paper is organized as follows. In Section \ref{1d}, we consider the one-dimensional spring-mass mechanical system. The Zak phase of the lattice is given in Lemma \ref{Zak_lemma} and the existence of the edge modes for the joint lattice is established in Theorem \ref{one_d_main_thm}. In Section \ref{2d}, we investigate the two-dimensional mechanical system over the honeycomb lattice. The valley Chern number for the lattice is summarized in Lemma \ref{chern_lemma} when the masses on the lattice vertices are uneven. Finally, the existence of edge modes for the joint topological mechanical insulator is established in Theorem \ref{two_d_main_thm}.

\section{One-dimensional Topological Mechanical Systems}\label{1d}

\subsection{Periodic Mechanical System}\label{zaksection}

\begin{figure}[h]
\centering
\includegraphics[scale=0.5]{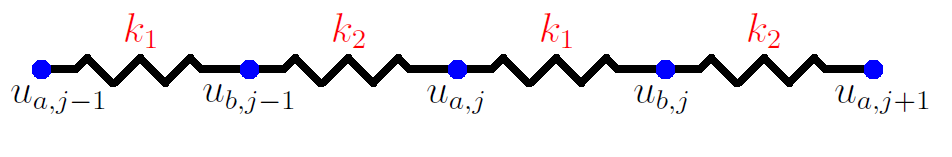}
\caption{One-dimensional lattice consisting of an array of masses connected by springs.}
\label{gencase}
\end{figure}

We consider the one-dimensional periodic mechanical system shown in Figure~\ref{gencase}, wherein an infinite array of masses are arranged along real line. The spring connecting two masses in the unit cell $j$ and two masses between the cell $j$ and $j+1$ attains the spring constants
\begin{align}\label{1d_ks}
k_1 = k(1+\gamma)\,\,\,\text{and}\,\,\, k_2 = k(1-\gamma),
\end{align}
where $\gamma$ is a stiffness parameter and $k$ is the mean stiffness of the springs. The displacements of masses in the unit cell $j$ satisfies the following equations:
\begin{align*}
mU_{a,j}'' + k_{1}\left(U_{a,j}-U_{b,j}\right) + k_{2}\left(U_{a,j}-U_{b,j-1} \right) &= 0,\\
mU_{b,j}'' + k_{2}\left(U_{b,j}-U_{a,j+1}\right) + k_{1}\left(U_{b,j}-U_{a,j} \right) &= 0.
\end{align*}
We consider the solution in the form of 
\begin{align}\label{soln_zak}
U_{a,j}(t) = u_{a}e^{i\omega \tau+\mu j} \,\, \text{and} \,\, U_{b,j}(t) = u_{b}e^{i\omega \tau + \mu j}
\end{align}
where $u_a$ and $u_b$ are the amplitudes of the displacements of masses, $j$ denotes the cell index, $\omega$ is the frequency, $\tau = \sqrt{k/m}t$ is nondimensional time scale and $\mu \in [-\pi,\pi]$ is the nondimensional wave number. Then $u_{a}$ and $u_{b}$ satisfy
\begin{align*}
-m\frac{k}{m}\omega^2 u_a + k(1+\gamma) (u_a - u_b)e^{i\omega \tau+\mu j} + k(1-\gamma)(u_a-u_be^{-i\mu})e^{i\omega \tau+\mu j} &= 0,\\
-m\frac{k}{m}\omega^2 u_b + k(1+\gamma) (u_b - u_ae^{i\mu})e^{i\omega \tau+\mu j} + k(1-\gamma)(u_b-u_a)e^{e^{i\omega \tau+\mu j}} &= 0,
\end{align*}
which reduces to the eigenvalue problem
\begin{align}\label{geneigen}
\begin{bmatrix}
2 & \bar{a}(\mu) \\
a(\mu) & 2
\end{bmatrix}
\begin{bmatrix}
u_a \\
u_b
\end{bmatrix}
= \omega^2
\begin{bmatrix}
u_a \\
u_b
\end{bmatrix},
\end{align}
where $a(\mu) = -\left(1+\gamma\right) - \left(1-\gamma\right)e^{i\mu}$ and $\bar{a}(\mu)$ is the complex conjugate of $a(\mu)$. The eigenpairs of matrix in (\ref{geneigen}) are
\begin{align*}
\lambda_{\pm}\left(\mu\right) = 2 \pm |a(\mu)|\,\,\,\text{and}\,\,\,\textbf{v}_\pm(\mu) = 
\frac{1}{\sqrt{2}}\begin{bmatrix}
\frac{\bar{a}(\mu)}{\pm|a(\mu)|} \\
1
\end{bmatrix},
\end{align*}
with $||\textbf{v}_\pm(\mu)||_2=1$. We note that if $\gamma \neq 0$, then $|a(\mu)| \neq 0$ and there is a gap between two bands $\lambda_-(\mu)$ and $\lambda_+(\mu)$ for $\mu \in [-\pi,\pi]$. We call this gap as the band gap interval
\begin{align}\label{bandgap_generic}
I(\gamma) := (\sqrt{2(1-|\gamma|)}, \sqrt{2(1+|\gamma|)}),
\end{align}
where $\sqrt{2(1-|\gamma|)}$ and $\sqrt{2(1+|\gamma|)}$ are maximum and minimum values of $\sqrt{\lambda_-(\mu)}$ and $\sqrt{\lambda_+(\mu)}$ respectively. We investigate the dynamics of the system for the frequency $\omega$ located in the band gap $I\left(\gamma\right)$ which is induced by a topological index called Zak phase.

The Zak phase associated with the frequency band $\lambda(\mu)$ is defined by (cf. \cite{vanderbilt}) 
\begin{align}\label{zak_phase}
\theta^{Zak} = \int_{-\pi}^\pi \left[i \left(\textbf{v}(\mu)\right)^H \partial_{\mu}\textbf{v}(\mu)\right]d\mu = -Im\left(\int_{-\pi}^\pi \left[\left(\textbf{v}(\mu)\right)^H \partial_{\mu}\textbf{v}(\mu)\right]d\mu\right),
\end{align}
where $\textbf{v} = \textbf{v}_+$ or $\textbf{v} = \textbf{v}_-$ is the eigenvector associated with the eigenvalue $\lambda_{\pm}(\mu)$ of the matrix defined in (\ref{geneigen}) and $\textbf{v}^H$ stands for complex conjugate transpose of $\textbf{v}$. To avoid the difficulties in calculation of the composition of differentiation and integration, we use the discretization of the integral in (\ref{zak_phase}). To this purpose, for $\mu_n = n\pi/N$, $n = -N,-(N-1),...,N-1,N$ where $N \in \mathbb{Z}^+$, we observe that
\begin{align*}
\log\left[ \left(\textbf{v}(\mu_n)\right)^H \textbf{v}(\mu_{n+1})\right] &= \log\left[\left(\textbf{v}(\mu_n)\right)^H \left(\textbf{v}(\mu_{n}) + \partial_\mu \textbf{v}(\mu_{n}) (\mu_{n+1}-\mu_n) \right) + O\left(N^{-2}\right)\right]\\
&=\log\left[\left(\textbf{v}(\mu_n)\right)^H \textbf{v}(\mu_{n}) + \left(\textbf{v}(\mu_n)\right)^H \partial_\mu \textbf{v}(\mu_{n}) (\mu_{n+1}-\mu_n) + O\left(N^{-2}\right)\right] \\
&= \log\left[1 + \left(\textbf{v}(\mu_n)\right)^H \partial_\mu \textbf{v}(\mu_{n}) (\mu_{n+1}-\mu_n)+ O\left(N^{-2}\right) \right]\\
&=\left(\textbf{v}(\mu_n)\right)^H \partial_\mu \textbf{v}(\mu_{n}) (\mu_{n+1}-\mu_n) + O(N^{-2}).
\end{align*}
Then, by discretization of the integral, Zak phase can be written as
\begin{align*}
\theta^{Zak} &= -Im\left(\int_{-\pi}^\pi \left[ \left(\textbf{v}(\mu)\right)^H \partial_{\mu}\textbf{v}(\mu)\right]d\mu\right) \\
&= -Im\left(\lim_{N \to \infty} \sum_{n=-N}^{N-1}\left(\textbf{v}(\mu_n)\right)^H \partial_\mu \textbf{v}(\mu_{n}) (\mu_{n+1}-\mu_n)\right) \\
&=\lim_{N \to \infty}  -\sum_{n=-N}^{N-1}Im\left(\log\left[ \left(\textbf{v}(\mu_n)\right)^H \textbf{v}(\mu_{n+1})\right]\right).
\end{align*}
We define the discrete Zak Phase as
\begin{align*}
\theta^{Zak}_N := - \sum_{n = -N}^{N-1} Im\left(\log\left[\left(\textbf{v}\left(\mu_n\right)\right)^H\textbf{v}\left(\mu_{n+1}\right)\right]\right).
\end{align*}
We have the following lemma for a complex number:
\begin{lemma}\label{arg}
For a complex number $z = re^{i\theta}$ with $\theta \in [-\pi,\pi]$, if $z+1 = |z+1|e^{i\beta}$, there holds 
\begin{align*}
\begin{cases}
\beta < \theta/2,\,\,\, \text{for}\,\,\,r<1,\\
\beta = \theta/2,\,\,\, \text{for}\,\,\,r=1,\\
\beta > \theta/2,\,\,\, \text{for}\,\,\,r>1.\\
\end{cases}
\end{align*}
\end{lemma}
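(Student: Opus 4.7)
The plan is to absorb the common half-angle by a rotation, reducing the three-way comparison to a single sign check on a real number. Multiplying the defining identity $z+1 = |z+1|e^{i\beta}$ by $e^{-i\theta/2}$ and expanding using $z = r e^{i\theta}$, one obtains
\begin{equation*}
e^{-i\theta/2}(z+1) \;=\; r e^{i\theta/2} + e^{-i\theta/2} \;=\; (r+1)\cos(\theta/2) + i(r-1)\sin(\theta/2).
\end{equation*}
The argument of the left-hand side is exactly $\beta - \theta/2$, so comparing $\beta$ with $\theta/2$ is the same as analyzing the sign of the imaginary part $(r-1)\sin(\theta/2)$ of the right-hand side, provided the real part has a definite sign.

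For the real part, the range $\theta \in [-\pi,\pi]$ forces $\theta/2 \in [-\pi/2, \pi/2]$ so $\cos(\theta/2) \ge 0$; combined with $r+1 > 0$, the rotated vector lies in the closed right half-plane, and strictly in the open right half-plane whenever $\theta \ne \pm\pi$. Thus $\beta - \theta/2 \in (-\pi/2, \pi/2)$ and its sign coincides with the sign of $(r-1)\sin(\theta/2)$. The three claims then read off directly: $r = 1$ gives imaginary part zero and hence $\beta = \theta/2$; while $r > 1$ and $r < 1$ give the two strict inequalities once the sign of $\sin(\theta/2)$ (which is the sign of $\theta$ on $[-\pi,\pi]$) is tracked against the sign of $(r-1)$.

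The main obstacle is just the bookkeeping around the degenerate cases. At $\theta = 0$ one has $\beta = \theta/2 = 0$ for all $r > 0$, so the strict inequalities collapse to equalities and must be read accordingly. At $\theta = \pm\pi$ the rotated vector becomes purely imaginary, and the argument $\beta-\theta/2$ saturates the boundary $\pm\pi/2$; these endpoints need to be verified by direct substitution into $z + 1 = 1 \mp r$ rather than through the half-plane argument above. Aside from these boundary verifications, the proof is essentially the one-line identity displayed above together with an inspection of signs, and no further estimation is needed.
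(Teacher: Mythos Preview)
Your argument is correct and takes a different route from the paper. The paper writes $z=a+ib$, invokes the half-angle identity $\tan(\theta/2)=b/(r+a)$, notes that $\tan\beta=b/(1+a)$, and compares the two fractions as $r$ crosses $1$, appealing to monotonicity of the tangent. Your rotation
\[
e^{-i\theta/2}(z+1)=(r+1)\cos(\theta/2)+i(r-1)\sin(\theta/2)
\]
reaches the same comparison without passing through $\tan$, which sidesteps the branch issues near $\pm\pi/2$ and makes the right-half-plane geometry explicit. Both arguments are essentially one line; yours is a bit more careful about the boundary cases and, as you correctly flag, exposes that the direction of the strict inequality in fact depends on the sign of $\theta$ (via $\sin(\theta/2)$) --- a subtlety the paper's statement and proof both suppress.
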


\begin{proof}
For $z = a+ib$, the half angle formula gives that
\begin{align*}
\tan(\theta/2) = \frac{\sin\left(\theta\right)}{1+\cos\left(\theta\right)} = \frac{b/r}{1+a/r} = \frac{b}{r+a}.
\end{align*}
Also, since $\tan(\beta) = \frac{b}{1+a}$, the results follow by the fact that tangent is an increasing function on $[-\pi,\pi]$.

\end{proof}

\begin{figure}[h]
\centering     
\subfigure[$\gamma > 0$]{\label{betapos}\includegraphics[width=73mm]{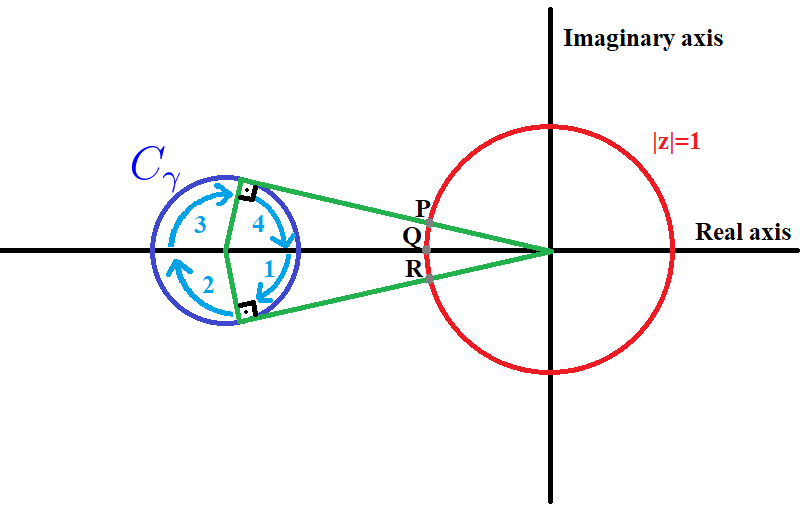}}
\subfigure[$\gamma < 0$]{\label{betaneg}\includegraphics[width=73mm]{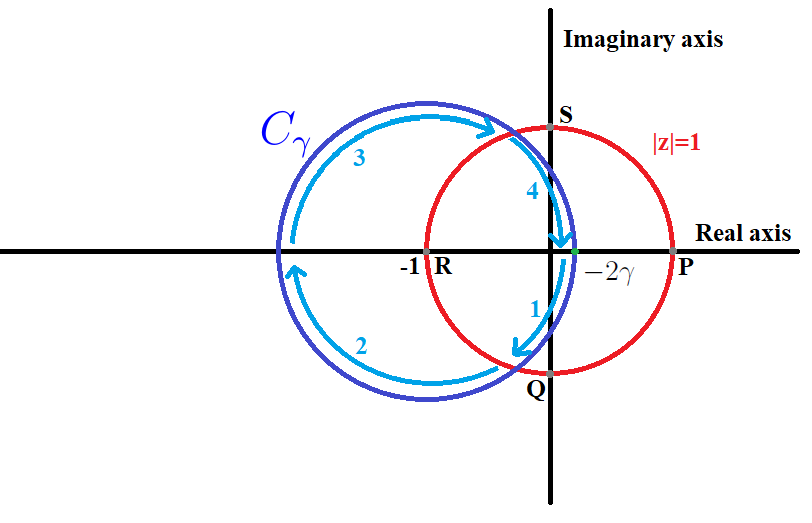}}
\caption{When $\gamma > 0$, $C_{\gamma}$, the circle with center $-(1+\gamma)$ and radius $1-\gamma$, does not enclose the origin. As $a(\mu)$ completes one turn on $C_{\gamma}$ from 1 to 4, the trajectory of $\frac{a(\mu)}{|a(\mu)|}$ is the path $Q \to R \to Q \to P \to Q$ on the unit circle and its argument oscillates near $\pi$. When $\gamma < 0$, $C_{\gamma}$ contains origin. As $a(\mu)$ completes one turn on $C_{\gamma}$, the trajectory of $\frac{a(\mu)}{|a(\mu)|}$ is the path $P \to Q \to R \to S \to P$ on the unit circle and its argument goes from $2\pi$ to $0$.}
\label{zak_graph}
\end{figure}
\begin{lemma}\label{Zak_lemma}
For the Zak phase $\theta^{Zak}$ associated with the band $\lambda_-(\mu)$ or $\lambda_+(\mu)$ of the system (\ref{geneigen}), we have
\begin{align*}
\begin{cases}
\theta^{Zak} = 0,\,\,\,\text{if}\,\,\,\gamma>0,\\
\theta^{Zak} = \pi,\,\,\,\text{if}\,\,\,\gamma<0.
\end{cases}
\end{align*}
\end{lemma}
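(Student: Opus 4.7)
The plan is to recognize $\theta^{Zak}$ as (essentially) the winding number of the curve $\mu \mapsto a(\mu)$ around the origin, and then read that winding number off the geometry of the circle $C_\gamma$. First I would plug the explicit eigenvectors $\mathbf{v}_\pm(\mu)$ into the discrete inner product: the $\pm$ signs cancel in the product $(\pm|a(\mu_n)|)(\pm|a(\mu_{n+1})|)=|a(\mu_n)||a(\mu_{n+1})|$, so for both bands one obtains
\begin{equation*}
(\mathbf{v}(\mu_n))^H\,\mathbf{v}(\mu_{n+1}) \;=\; \tfrac{1}{2}\bigl(1+z_n\bigr), \qquad z_n := \frac{a(\mu_n)\,\overline{a(\mu_{n+1})}}{|a(\mu_n)|\,|a(\mu_{n+1})|}.
\end{equation*}
Since $|z_n|=1$, the $r=1$ case of Lemma \ref{arg} gives $\mathrm{Im}\log\bigl(\tfrac{1}{2}(1+z_n)\bigr)=\tfrac{1}{2}\mathrm{arg}(z_n)$, so
\begin{equation*}
\theta^{Zak}_N \;=\; -\tfrac{1}{2}\sum_{n=-N}^{N-1}\mathrm{arg}(z_n).
\end{equation*}

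Next I would pass from the principal argument to a continuous lift. Because $\gamma \neq 0$, the curve $\mu \mapsto a(\mu)$ avoids the origin on $[-\pi,\pi]$, and therefore admits a continuous lift $\tilde\phi:[-\pi,\pi]\to\mathbb{R}$ with $a(\mu)/|a(\mu)|=e^{i\tilde\phi(\mu)}$. By uniform continuity of $\tilde\phi$, for all sufficiently large $N$ one has $|\tilde\phi(\mu_{n+1})-\tilde\phi(\mu_n)|<\pi$, which forces $\mathrm{arg}(z_n)=\tilde\phi(\mu_n)-\tilde\phi(\mu_{n+1})$ (no $2\pi$ corrections needed). The sum telescopes to
\begin{equation*}
\sum_{n=-N}^{N-1}\mathrm{arg}(z_n) \;=\; \tilde\phi(-\pi)-\tilde\phi(\pi) \;=\; -2\pi w,
\end{equation*}
where $w$ denotes the winding number of the loop $\mu\mapsto a(\mu)$ around $0$. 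Letting $N\to\infty$ therefore gives $\theta^{Zak}=\pi w$.

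Finally I would compute $w$ geometrically, which is exactly the content of Figure \ref{zak_graph}. The image of $\mu\mapsto a(\mu)$ is the circle $C_\gamma$ of center $-(1+\gamma)$ and radius $|1-\gamma|$. For $\gamma>0$, the distance $1+\gamma$ from the center to $0$ strictly exceeds the radius $1-\gamma$, so $C_\gamma$ does not enclose the origin and $w=0$, giving $\theta^{Zak}=0$. For $\gamma<0$ (with $|\gamma|<1$) the inequality reverses, $C_\gamma$ encloses the origin exactly once, and $|w|=1$, giving $\theta^{Zak}=\pm\pi$, which equals $\pi$ modulo $2\pi$.

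The main obstacle is the bookkeeping of branches of $\log$: one has to be sure that each $\mathrm{arg}(z_n)$ is the \emph{small} representative of $\tilde\phi(\mu_n)-\tilde\phi(\mu_{n+1})$ modulo $2\pi$, and that the principal argument of the inner product $\tfrac{1}{2}(1+z_n)$ is exactly half that of $z_n$. The first point follows from the uniform-continuity argument above and the second is precisely Lemma \ref{arg} in the unit-modulus case; beyond these technicalities, the lemma is a purely topological statement about the loop $\{a(\mu)\}_{\mu\in[-\pi,\pi]}$.
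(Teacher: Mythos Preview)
Your proof is correct and follows essentially the same route as the paper: compute the discrete inner product, apply the $r=1$ case of Lemma~\ref{arg} to halve the angle, telescope, and read off the result from whether the circle $C_\gamma$ encloses the origin. Your framing in terms of the winding number of $\mu\mapsto a(\mu)$ and your explicit use of a continuous lift $\tilde\phi$ with the uniform-continuity step make rigorous exactly the point the paper handles informally (tracking $\theta_n$ as it ``oscillates near $\pi$'' versus ``goes from $2\pi$ to $0$''); the only loose end is your final sentence, where in fact $w=+1$ for $\gamma<0$ (the loop is traversed counterclockwise), so $\theta^{Zak}=\pi$ on the nose rather than merely $\pm\pi$ modulo $2\pi$.
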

\begin{proof}
For $\lambda_+(\mu)$, we have $\mathbf{v} = \mathbf{v}_+$. Let $\theta_n$ be the argument of the first component of $\textbf{v}_+(\mu_n)$. Then a direct computation leads to
\begin{align*}
\textbf{v}_+^H\left(\mu_n\right)\textbf{v}_+\left(\mu_{n+1}\right) &= \frac{1}{2}\left(\begin{bmatrix}
\frac{a\left(\mu_n\right)}{|a\left(\mu_n\right)|} &
1
\end{bmatrix}
\begin{bmatrix}
\frac{\bar{a}\left(\mu_{n+1}\right)}{|a\left(\mu_{n+1}\right)|} \\
1
\end{bmatrix}
\right) \\
&=\frac{1}{2} \left(e^{i(\theta_{n+1}-\theta_n)}+ 1 \right) \\
&= \frac{c_n}{2}\exp\left(i\frac{\theta_{n+1}-\theta_n}{2}\right),
\end{align*}
where $c_n$ is the modulus of $e^{i(\theta_{n+1}-\theta_n)}+ 1$ and we have used Lemma~\ref{arg} in the last step. Thus the discrete Zak phase
\begin{align*}
\theta_N^{Zak} = - \sum_{n = -N}^{N-1} Im\left(\log\left[\textbf{v}_+^H\left(\mu_{n}\right)\textbf{v}_+\left(\mu_{n+1}\right)\right]\right) &=- \sum_{n = -N}^{N-1} Im\left(\log \left[\frac{c_n}{2}\left(\exp\left(i\frac{\theta_{n+1}-\theta_n}{2}\right)\right)\right]\right)\\
&= -\frac{\theta_N - \theta_{-N}}{2}.
\end{align*}
If $\gamma>0$, we denote the circle with center $-(1+\gamma)$ and radius $1-\gamma$ as $C_\gamma$. As $\mu$ goes from $-\pi$ to $\pi$, $\bar{a}(\mu) = -(1+\gamma) - (1-\gamma)e^{-i\mu}$ completes one turn on the circle $C_\gamma$ clockwisely starting from the point $-2\gamma$ on complex plane. Since $|-(1+\gamma)| = 1+\gamma > 1-\gamma$, the distance between the center of $C_\gamma$ and the origin is greater than its radius, hence $C_\gamma$ does not enclose the origin (see Figure \ref{betapos}). Therefore, the argument of $\frac{a\left(\mu\right)}{|a\left(\mu\right)|}$ oscillates around $\pi$ as $\mu$ goes from $-\pi$ to $\pi$. As a result, we obtain $\theta_N = \theta_{-N}$ and
\begin{align*}
\theta_N^{Zak} = -\frac{\theta_N - \theta_{-N}}{2} = 0.
\end{align*}
If $\gamma<0$, $\bar{a}(\mu)$ still completes one turn on the circle $C_\gamma$. However, noting that the distance between the center of $C_\gamma$ and the origin is less than the radius of $C_\gamma$ and $C_\gamma$ encloses the origin. Thus the argument of $\frac{a\left(\mu\right)}{|a\left(\mu\right)|}$ goes from $2\pi$ to $0$ (see Figure \ref{betaneg}) and we have
\begin{align*}
\theta_N^{Zak} = -\frac{\theta_N - \theta_{-N}}{2} = -\frac{0 - 2\pi}{2} = \pi.
\end{align*}
For $\lambda_-(\mu)$, we can obtain the same results by similar calculations. Therefore, the proof is complete by noting that $\theta^{Zak} = \lim_{N \to \infty}\theta_N^{Zak}$.
\end{proof}

\subsection{Edge Modes for the Topological Mechanical System}

\begin{figure}[h]
\centering
\includegraphics[scale=0.5]{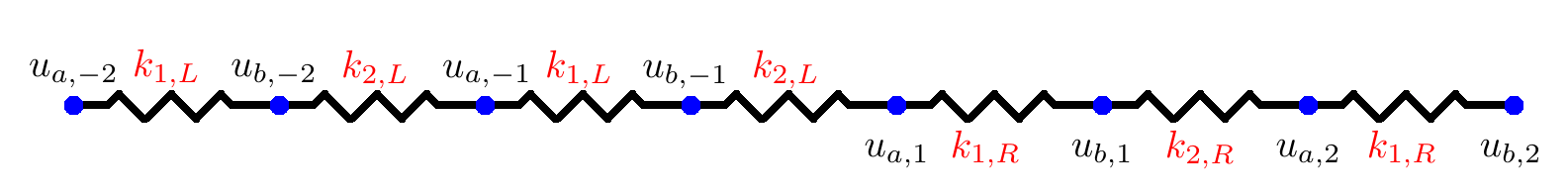}
\caption{The topological mechanical system in one dimension}
\label{system1d}
\end{figure}

We construct a joint system by gluing two periodic mechanical systems with different spring constants as shown in Figure \ref{system1d}. On the left, the spring constants defined in (\ref{1d_ks}) for each unit cell are $k_{1,L} = k(1+\gamma_L)$, $k_{2,L} = k(1-\gamma_L)$ and the spring constants for each unit cell on the right are $k_{1,R} = k(1+\gamma_R)$ and $k_{2,R} = k(1-\gamma_R)$. We assume that $\gamma_L$ and $\gamma_R$ have different signs. In the light of Lemma \ref{Zak_lemma}, these two systems attain different Zak phases, $\theta_L \neq \theta_R$. For $j = 2, 3,...$, the governing equations for the displacement of the masses at the left and right periodic systems are
\begin{equation} \label{generic_eqns_L}
\begin{aligned}
mU_{a,-j+1}'' + k_{1,L}\left(U_{a,-j+1}-U_{b,-j+1}\right) + k_{2,L}\left(U_{a,-j+1}-U_{b,-j} \right) &= 0,\\
mU_{b,-j}'' + k_{2,L}\left(U_{b,-j}-U_{a,-j+1}\right) + k_{1,L}\left(U_{b,-j}-U_{a,-j} \right) &= 0,
\end{aligned}
\end{equation}
and
\begin{equation} \label{generic_eqns_R}
\begin{aligned}
mU_{a,j}'' + k_{1,R}\left(U_{a,j}-U_{b,j}\right) + k_{2,R}\left(U_{a,j}-U_{b,j-1} \right) &= 0,\\
mU_{b,j-1}'' + k_{2,R}\left(U_{b,j-1}-U_{a,j}\right) + k_{1,R}\left(U_{b,j-1}-U_{a,j-1} \right) &= 0.
\end{aligned}
\end{equation}
respectively. The governing equations for the displacement of the masses located at the interface of two periodic systems are
\begin{equation} \label{1dmiddle_eqn}
\begin{aligned}
mU_{a,1}'' + k_{1,R}\left(U_{a,1}-U_{b,1}\right) + k_{2,L}\left(U_{a,1}-U_{b,-1}\right)&=0,\\
mU_{b,-1}'' + k_{2,L}\left(U_{b,-1} - U_{a,1}\right) + k_{1,L} \left(U_{b,-1} - U_{a,-1}\right) &=0.
\end{aligned}
\end{equation}

A non-trivial solution $(U_{a,j}(t),U_{b,j}(t))$ for (\ref{generic_eqns_L}) - (\ref{1dmiddle_eqn}) which decays to zero as $j \to \pm\infty$ is called an edge mode.

In the rest of this subsection, we aim to show the existence of edge modes for the joint system in Figure \ref{system1d} when $\theta_L \neq \theta_R$ and $\omega \in I$, the common band gap, where
\begin{align*}
I : = I(\gamma_R) \cap I(\gamma_L) = \left[\sqrt{2(1-\left|\gamma_R\right|)},\sqrt{2(1+\left|\gamma_R\right|)}\right] \cap \left[\sqrt{2(1-\left|\gamma_L\right|)},\sqrt{2(1+\left|\gamma_L\right|)}\right].
\end{align*}
Our main result is stated in the following theorem:

\begin{theorem}\label{one_d_main_thm}
(Existence of edge modes) If $\gamma_L\gamma_R<0$ such that two periodic mechanical systems with the Zak phase $\theta_L\neq \theta_R$ are glued together as shown in Figure \ref{system1d}, then there exists an edge mode $(U_{a,j}(t),U_{b,j}(t))$ in the form of 
\begin{align}\label{1dsoln}
U_{a,j}(t) = u_{a,j}e^{i\omega \tau} \,\, \text{and} \,\, U_{b,j}(t) = u_{b,j}e^{i\omega \tau}
\end{align}
where $j$ denotes the cell index, $\tau$ is the nondimensional time scale and $\omega \in I$.
\end{theorem}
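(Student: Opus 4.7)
The plan is to substitute the harmonic ansatz (\ref{1dsoln}) into the governing equations, identify the exponentially decaying bulk modes on each half-lattice, match them at the interface to obtain a scalar determinant condition $D(\omega)=0$, and show that this condition has a root in the common band gap $I$.

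After cancelling $e^{i\omega\tau}$, the bulk equations (\ref{generic_eqns_L})-(\ref{generic_eqns_R}) become constant-coefficient difference equations for the amplitudes $(u_{a,j},u_{b,j})$, and on each half-lattice I look for separable solutions $u_{a,j}=A\xi^{j}$, $u_{b,j}=B\xi^{j}$. The same computation that produced (\ref{geneigen}) (with the substitution $e^{i\mu}\leftrightarrow\xi$) now gives the characteristic relation
\[
(2-\omega^{2})^{2}\;=\;2(1+\gamma^{2}) + (1-\gamma^{2})(\xi+\xi^{-1}),
\]
whose two roots are reciprocal. For $\omega\in I(\gamma)$ both roots are real with $|\xi|\neq 1$, so each half-lattice carries a unique (up to scale) exponentially decaying bulk mode, namely $\xi_{R}$ with $|\xi_{R}|<1$ on the right and $\xi_{L}$ with $|\xi_{L}|>1$ on the left. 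The eigenvector equation fixes the ratio $B/A$ in terms of $\xi$, so each half-solution is determined by a single scalar amplitude $c_{L}$ or $c_{R}$. Inserting these one-parameter ansätze into the interface equations (\ref{1dmiddle_eqn}) produces a homogeneous $2\times 2$ system $M(\omega)(c_{L},c_{R})^{T}=0$, and an edge mode at frequency $\omega\in I$ exists if and only if $D(\omega):=\det M(\omega)=0$; once such an $\omega^{*}$ is located, the decay at $|j|\to\infty$ is automatic from the choice of $\xi_{L},\xi_{R}$.

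The main obstacle is exhibiting a zero of $D$ inside $I$ under the sign hypothesis $\gamma_{L}\gamma_{R}<0$. For the sign combination $\gamma_{L}<0<\gamma_{R}$ an explicit mid-gap construction is available: at $\omega=\sqrt{2}$ the decaying root on each side reads $\xi_{\bullet}=-(1-\gamma_{\bullet})/(1+\gamma_{\bullet})$, for which the coefficient $(1+\gamma_{\bullet})+(1-\gamma_{\bullet})/\xi_{\bullet}$ in the eigenvector equation vanishes and therefore forces the $a$-sublattice amplitudes to be zero on both sides. The second interface equation is then identically satisfied and the first collapses to the single scalar relation $(1+\gamma_{R})u_{b,1}+(1-\gamma_{L})u_{b,-1}=0$, which is solved by choosing $u_{b,\pm 1}$ in the obvious ratio, producing an edge mode supported entirely on the $b$-sublattice. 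For the complementary combination $\gamma_{R}<0<\gamma_{L}$ this mid-gap ansatz reduces to the trivial solution, and the edge frequency lies strictly inside $I\setminus\{\sqrt{2}\}$; there I plan to apply the intermediate value theorem. Using the identity $\xi_{\bullet}+\xi_{\bullet}^{-1}=[(2-\omega^{2})^{2}-2(1+\gamma_{\bullet}^{2})]/(1-\gamma_{\bullet}^{2})$ to eliminate the radicals, I would expand $D(\omega)$ as an elementary function of $\omega,\gamma_{L},\gamma_{R}$ and verify that its sign flips between $\omega=\sqrt{2}$ and one of the band edges precisely when $\gamma_{L}\gamma_{R}<0$. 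The principal technical difficulty is controlling the degenerate band-edge limits, where the two characteristic roots coalesce at $\xi=-1$ and the decaying ansatz becomes singular; a careful expansion at these endpoints, together with the involution $\gamma\mapsto -\gamma$ that exchanges the two sublattices and symmetrizes the two sign cases, should complete the argument.
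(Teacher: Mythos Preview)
Your setup is the same transfer-matrix/interface-matching argument the paper uses; the paper writes the decaying modes via the eigenpairs $(\lambda_R,\mathbf e_R)$, $(\lambda_L,\mathbf e_L)$ in (\ref{one_d_evector_right})--(\ref{one_d_evector_left}), substitutes into the interface relations, and reduces to the scalar condition (\ref{final_eqn_1d}). Where you split into two sign cases, the paper does not: it simply checks that $\Omega=2-\omega^2=0$ makes both sides of (\ref{final_eqn_1d}) vanish whenever $\gamma_L\gamma_R<0$, and declares $\omega=\sqrt{2}\in I$ the edge frequency in all cases.

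Your caution about the second sign case is in fact warranted. For $\gamma_L<0<\gamma_R$ the mid-gap mode you wrote down (supported on the $b$-sublattice) is exactly the paper's solution. For $\gamma_R<0<\gamma_L$, however, at $\omega=\sqrt{2}$ the transfer matrix $T_R$ is diagonal with decaying eigenvector $(1,0)^T$, whereas the paper's representative $\mathbf e_R=(2-\omega^2,\,1+\gamma_R+(1-\gamma_R)\lambda_R)^T$ collapses to the zero vector; the root $\Omega=0$ of (\ref{final_eqn_1d}) then corresponds to nontrivial $(c_1,c_2)$ but \emph{trivial} displacements $\mathbf u_{\pm 1}=0$. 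A direct check with the correct eigenvectors $(1,0)^T$ on both sides shows the interface equation $(\gamma_R-\gamma_L)u_{a,1}=0$ forces the solution to vanish. So the paper's proof, as written, does not actually cover this case, and your instinct to treat it separately is correct. That said, your proposed fix is only a sketch: the intermediate-value argument needs the sign of $D(\omega)$ pinned down at the band edges, precisely where the decaying and growing characteristic roots coalesce and the eigenvector normalization again becomes delicate. Completing that endpoint analysis (or finding the explicit root in $I\setminus\{\sqrt2\}$ directly) is the real content still missing from your proposal.
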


\subsubsection{Transfer Matrix for the Periodic System}

Assume that the solutions of (\ref{generic_eqns_L})-(\ref{generic_eqns_R}) take the form in (\ref{1dsoln}). For the right periodic system, $u_{a,j}$ and $u_{b,j}$ satisfy
\begin{align*}
-m \frac{k}{m}\omega^2 u_{a,j}e^{i\omega \tau} + k(1+\gamma_R)\left(u_{a,j}e^{i\omega \tau} - u_{b,j}e^{i\omega \tau} \right) + k(1-\gamma_R)\left(u_{a,j}e^{i\omega \tau} - u_{b,j-1}e^{i\omega \tau} \right) &= 0,\\
-m \frac{k}{m}\omega^2 u_{b,j-1}e^{i\omega \tau} + k(1-\gamma_R)\left(u_{b,j-1}e^{i\omega \tau} - u_{a,j}e^{i\omega \tau} \right) + k(1+\gamma_R)\left(u_{b,j-1}e^{i\omega \tau} - u_{a,j-1}e^{i\omega \tau} \right) &= 0,
\end{align*}
which implies
\begin{align*}
\left(2-\omega^2\right) u_{a,j} - (1+\gamma_R)u_{b,j} - (1-\gamma_R)u_{b,j-1} &= 0,\\
\left(2-\omega^2\right) u_{b,j-1} - (1-\gamma_R)u_{a,j} - (1+\gamma_R)u_{a,j-1} &= 0.
\end{align*}
This can be written as the system $A\mathbf{u_{j-1}} = B\mathbf{u_j}$, where
\begin{align*}
A = \begin{bmatrix}
0 & 1-\gamma_R\\
-(1+\gamma_R) & 2-\omega^2
\end{bmatrix}
,\,\,
B = \begin{bmatrix}
2-\omega^2 & -(1+\gamma_R)\\
1-\gamma_R & 0
\end{bmatrix} \,\, \text{and} \,\,
\mathbf{u_j} = \begin{bmatrix}
u_{a,j}\\
u_{b,j}
\end{bmatrix}.
\end{align*}
We rewrite $A\mathbf{u}_{j-1} = B\mathbf{u}_j$ as
\begin{align*}
T_R\mathbf{u}_{j-1} = \mathbf{u}_j,
\end{align*}
where the transfer matrix
\begin{align*}
T_R = B^{-1}A = \begin{bmatrix}
-\frac{1+\gamma_R}{1-\gamma_R} & \frac{2-\omega^2}{1-\gamma_R}\\
-\frac{2-\omega^2}{1-\gamma_R} & \frac{(2-\omega^2)^2 - (1-\gamma_R)^2}{1-\gamma_R^2}
\end{bmatrix}.
\end{align*}
It can be shown that
\begin{align*}
det\left(T_R\right) = 1.
\end{align*}
The eigenvalues of $T_R$ are
\begin{align*}
\lambda_{\pm,R}&= \frac{1}{2}\left(\frac{\left(2-\omega^2\right)^2 - (1-\gamma_R)^2 - (1+\gamma_R)^2}{1-\gamma_R^2} \pm \sqrt{\frac{1}{\left(1-\gamma_R^2\right)^2} \left[\omega^4-4\omega^2\right]\left[\left(2-\omega^2\right)^2 - 4\gamma_R^2\right]}\right)\\
&= \frac{1}{2(1-\gamma_R^2)}\left(\left(2-\omega^2\right)^2 - (1-\gamma_R)^2 - (1+\gamma_R)^2 \pm\sqrt{\omega^2\left(\omega^2-4\right)\left[\left(2-\omega^2\right)^2-4\gamma_R^2\right]}\right).
\end{align*}
The corresponding eigenvectors $\mathbf{e}_{\pm,R}$ are
\begin{align}\label{one_d_evector_general}
\mathbf{e}_{\pm,R}= \begin{bmatrix}
(2-\omega^2)\\
1+\gamma_R+\left(1-\gamma_R\right)\lambda_{\pm,R}
\end{bmatrix}.
\end{align}
For the left periodic system, similar calculations give that
\begin{align*}
T_L \mathbf{u}_{-j+1} = \mathbf{u}_{-j}
\end{align*}
where the transfer matrix
\begin{align*}
T_L = \begin{bmatrix}
\frac{(2-\omega^2)^2 - \left(1-\gamma_L\right)^2}{1-\gamma_L^2} & -\frac{2-\omega^2}{1-\gamma_L}\\
\frac{2-\omega^2}{1-\gamma_L} & \frac{1+\gamma_L}{1-\gamma_L}
\end{bmatrix},
\end{align*}
with eigenvalues
\begin{align*}
\lambda_{\pm,L}:= \frac{1}{2(1-\gamma_L^2)}\left(\left(2-\omega^2\right)^2 - (1-\gamma_L)^2 - (1+\gamma_L)^2 \pm \sqrt{\omega^2\left(\omega^2-4\right)\left[\left(2-\omega^2\right)^2-4\gamma_L^2\right]}\right)
\end{align*}
and the corresponding eigenvector
\begin{align*}
\mathbf{e}_{\pm,L} = \begin{bmatrix}
1+\gamma_L + \left(1-\gamma_L\right)\lambda_{\pm,L}\\
2-\omega^2
\end{bmatrix}.
\end{align*}
Note that $\sqrt{2(1-\left|\gamma_L\right|)} < \sqrt{2(1+\left|\gamma_R\right|)} \leq \sqrt{2(1+\left|\gamma_L\right|)}$ and $\sqrt{2(1-\left|\gamma_R\right|)} < \sqrt{2(1+\left|\gamma_L\right|)} \leq \sqrt{2(1+\left|\gamma_R\right|)}$ when $|\gamma_R| \leq |\gamma_L|$ and $|\gamma_L| \leq |\gamma_R|$ respectively. It follows that $I\neq \emptyset$.

For $\omega \in I$, $\lambda_{\pm,R}$ and $\lambda_{\pm,L}$ are real. Since $\det(T_R) = 1$ and $\lambda_{+,R} \neq \lambda_{-,R}$, one of $|\lambda_{+,R}|$ and $|\lambda_{-,R}|$ is less than $1$, $\mathbf{e}_{+,R}$ and $\mathbf{e}_{-,R}$ are linearly independent and $\{\mathbf{e}_{+,R}, \mathbf{e}_{-,L}\}$ form a basis of $\R^2$. Hence, $\mathbf{u}_1 = \begin{bmatrix} u_{a,1}\\ u_{b,1} \end{bmatrix}$ can be written as
\begin{align*}
\mathbf{u}_1 = a_1 \mathbf{e}_{+,R} + a_2 \mathbf{e}_{-,R}
\end{align*} 
for some constants $a_1, a_2 \in \R$. Then
\begin{align*}
\mathbf{u}_j = T_R^j \mathbf{u}_1 = a_1 \left(\lambda_{+,R}\right)^{j-1}\mathbf{e}_{+,R} + a_2 \left(\lambda_{-,R}\right)^{j-1}\mathbf{e}_{-,R}.
\end{align*}
For $\mathbf{u}_j$ to vanish as $j \to \infty$, it is necessary that $\mathbf{u}_1 = \begin{bmatrix} u_{a,1}\\ u_{b,1} \end{bmatrix}$ is parallel to the eigenvector of $T_R$ whose corresponding eigenvalue has absolute value less than 1. Similarly, $\mathbf{u}_{-1}=\begin{bmatrix} u_{a,-1}\\ u_{b,-1} \end{bmatrix}$ must be parallel to the eigenvector of $T_L$ whose corresponding eigenvalue has absolute value less than 1 in order for $\mathbf{u}_j$ to vanish as $j \to -\infty$.

To find the eigenvalues of $T_R$ and $T_L$ with absolute value less than 1, we have
\begin{align*}
\left(2-\omega^2\right)^2 - (1-\gamma_R)^2 - (1+\gamma_R)^2 &\leq \left(2\min\{\left|\gamma_R\right|,\left|\gamma_L\right|\}\right)^2- (1-\gamma_R)^2 - (1+\gamma_R)^2\\
&\leq 4\gamma_R^2 - 1+2\gamma_R-\gamma_R^2 - 1 - 2\gamma_R - \gamma_R^2\\
&= 2(\gamma_R^2 -1)\leq 0.
\end{align*}
Thus $\lambda_{-}\left(\gamma_R\right)< \lambda_{+}\left(\gamma_L\right)<0$ and the eigenvalue of $T_R$ with absolute value less than $1$ is
\begin{align*}
\lambda_R := \lambda_{+,R} = \frac{1}{2(1-\gamma_R^2)}\left(\left(2-\omega^2\right)^2 - (1-\gamma_R)^2 - (1+\gamma_R)^2 + \omega\sqrt{\left(\omega^2-4\right)\left[\left(2-\omega^2\right)^2-4\gamma_R^2\right]}\right).
\end{align*}
The corresponding eigenvector $\mathbf{e}_R$ is
\begin{align}\label{one_d_evector_right}
\mathbf{e}_R := \mathbf{e}_{+,R} = \begin{bmatrix}
(2-\omega^2)\\
1+\gamma_R+\left(1-\gamma_R\right)\lambda_R
\end{bmatrix}.
\end{align}
By similar calculations, we obtain
\begin{align*}
\lambda_L:= \frac{1}{2(1-\gamma_L^2)}\left(\left(2-\omega^2\right)^2 - (1-\gamma_L)^2 - (1+\gamma_L)^2 + \omega\sqrt{\left(\omega^2-4\right)\left[\left(2-\omega^2\right)^2-4\gamma_L^2\right]}\right)
\end{align*}
and the corresponding eigenvector
\begin{align}\label{one_d_evector_left}
\mathbf{e}_L = \begin{bmatrix}
1+\gamma_L + \left(1-\gamma_L\right)\lambda_L\\
2-\omega^2
\end{bmatrix}.
\end{align}
Therefore, we obtain that $\mathbf{u}_1 = \begin{bmatrix} u_{a,1}\\ u_{b,1} \end{bmatrix}$ must be parallel to $\mathbf{e}_R$ and $\mathbf{u}_{-1}=\begin{bmatrix} u_{a,-1}\\ u_{b,-1} \end{bmatrix}$ must be parallel to $\mathbf{e}_L$ to vanish as $j \to \pm\infty$.

\subsubsection{Proof of Theorem~\ref{one_d_main_thm}}

By substituting (\ref{1dsoln}) into the equations (\ref{1dmiddle_eqn}), we get
\begin{equation}\label{1d_middle_eqn}
\begin{aligned}
\left(2-\omega^2 + \gamma_R - \gamma_L\right) u_{a,1} - \left(1+\gamma_R\right)u_{b,1} - \left(1-\gamma_L\right)u_{b,-1} &=0,\\
\left(2-\omega^2\right)u_{b,-1} - \left(1-\gamma_L\right)u_{a,1} - \left(1+\gamma_L\right)u_{a,-1} &= 0.
\end{aligned}
\end{equation}
Since $\begin{bmatrix} u_{a,1}\\ u_{b,1} \end{bmatrix}$ and $\begin{bmatrix} u_{a,-1}\\ u_{b,-1} \end{bmatrix}$ are parallel to $\mathbf{e}_R$ and $\mathbf{e}_L$ respectively, there holds
\begin{align*}
\begin{bmatrix}
u_{a,1}\\
u_{b,1}
\end{bmatrix} = c_1 \mathbf{e}_R \,\, \text{and} \,\, 
\begin{bmatrix}
u_{a,-1}\\
u_{b,-1}
\end{bmatrix} = c_2 \mathbf{e}_L
\end{align*}
for some constants $c_1$ and $c_2$. Then by (\ref{1d_middle_eqn}), we obtain
\begin{equation}
\begin{aligned}\label{1dmiddle_last}
\left(2-\omega^2 + \gamma_R - \gamma_L\right) c_1 \left(2-\omega^2\right) - \left(1+\gamma_R\right)c_1 \left(1+\gamma_R + \left(1-\gamma_R\right)\lambda_R\right) - \left(1-\gamma_L\right)c_2 \left(2-\omega^2\right) &=0,\\
\left(2-\omega^2\right)c_2\left(2-\omega^2\right) - \left(1-\gamma_L\right)c_1 \left(2-\omega^2\right) - \left(1+\gamma_L\right)c_2 \left(1+\gamma_L+\left(1-\gamma_L\right)\lambda_L\right) &= 0.
\end{aligned}
\end{equation}
(\ref{1dmiddle_last}) can be simplified as
\begin{equation}\label{c_1c_2_eqn1}
\begin{aligned}
\left[\Omega^2 + \left(\gamma_R - \gamma_L\right)\Omega - \left(1+\gamma_R\right)^2 - \left(1-\gamma_R^2\right)\lambda_R\right]c_1 &= \left(1-\gamma_L\right) \Omega c_2,\\
\left[\Omega^2 - \left(1+\gamma_L\right)^2 - \left(1-\gamma_L^2\right)\lambda_L\right]c_2 &= \left(1-\gamma_L\right)\Omega c_1,
\end{aligned}
\end{equation}
where $\Omega = 2-\omega^2$. Multiplying the second equation in (\ref{c_1c_2_eqn1}) by $\left(1-\gamma_L\right)\Omega$ gives 
\begin{align}\label{c_1c_2_eqn2}
\left[\Omega^2 - \left(1+\gamma_L\right)^2 - \left(1-\gamma_L^2\right)\lambda_L\right]\left(1-\gamma_L\right)\Omega c_2 &= \left(1-\gamma_L\right)^2\Omega^2 c_1.
\end{align}
By the first equation in (\ref{c_1c_2_eqn1}) and (\ref{c_1c_2_eqn2}), we have
\begin{align*}
\left[\Omega^2 - \left(1+\gamma_L\right)^2 - \left(1-\gamma_L^2\right)\lambda_L\right]\left[\Omega^2 + \left(\gamma_R-\gamma_L\right)\Omega - \left(1+\gamma_R\right)^2 - \left(1-\gamma_R^2\right)\lambda_R\right] = \left(1-\gamma_L\right)^2\Omega^2,
\end{align*}
which can be simplified as
\begin{align}\label{final_eqn_1d}
\left[\Omega^2 - 4\gamma_L - \sqrt{\left(4-\Omega^2\right)\left(4\gamma_L^2-\Omega^2\right)}\right]\left[\Omega^2 + 2\left(\gamma_R-\gamma_L\right)\Omega - 4\gamma_R - \sqrt{\left(4-\Omega^2\right)\left(4\gamma_R^2-\Omega^2\right)}\right] = 4\left(1-\gamma_L\right)^2\Omega^2.
\end{align}
Observe that, for $\Omega=0$, we have
\begin{align*}
\left[0^2 - 4\gamma_L - \sqrt{\left(4-0^2\right)\left(4\gamma_L^2-0^2\right)}\right]\left[0^2 + 2\left(\gamma_R-\gamma_L\right)0 - 4\gamma_R - \sqrt{\left(4-0^2\right)\left(4\gamma_R^2-0^2\right)}\right] =  4\left(1-\gamma_L\right)^20^2
\end{align*}
which is equivalent to
\begin{align}\label{kok2}
\left(\gamma_L +|\gamma_L|\right)\left(\gamma_R+|\gamma_R|\right) = 0.
\end{align}
Since $\gamma_L$ and $\gamma_R$ have different signs, either $\gamma_L +|\gamma_L| = 0$ or $\gamma_R+|\gamma_R| = 0$. Therefore, (\ref{kok2}) holds and $\Omega=0$ is a solution of (\ref{final_eqn_1d}). Note that $\Omega = 0$ gives $\omega = \sqrt{2}\in I$. This proves the theorem.

\section{Two-dimensional Honeycomb Topological Mechanical System}\label{2d}

\begin{figure}[h]
\centering     
\subfigure[The periodic mechanical system.]{\label{gen_stru}\includegraphics[width=90mm]{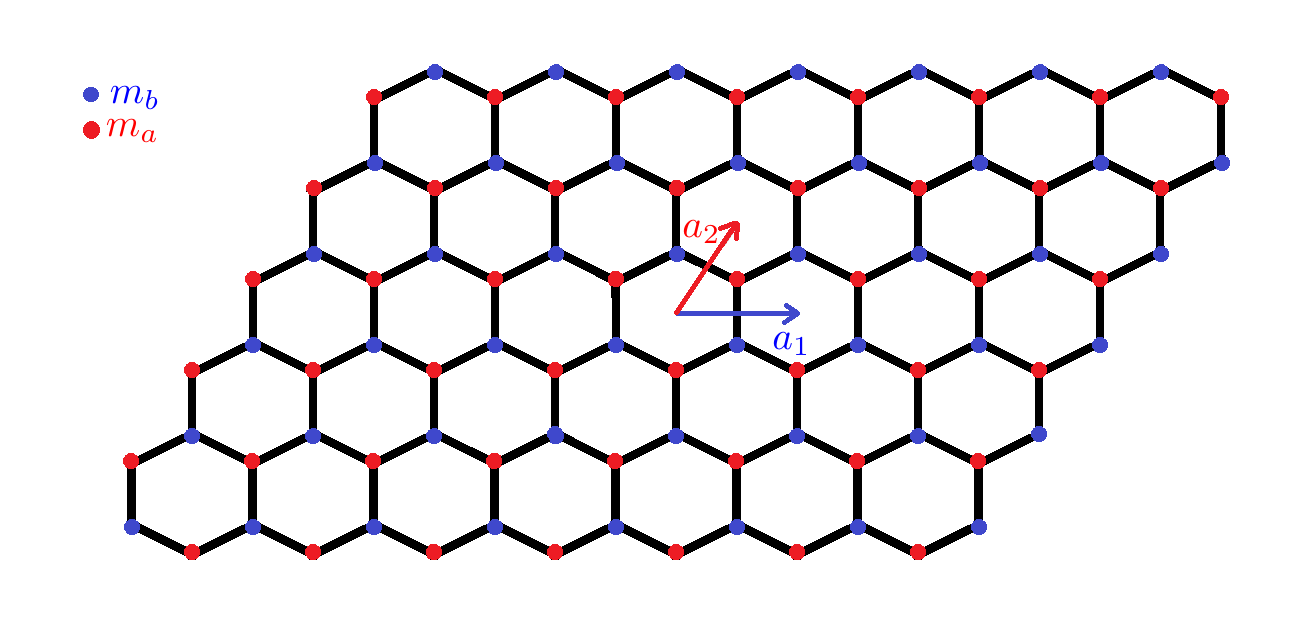}}
\subfigure[Periodic cell $Y_{p,q} = \{(p+t_1)\mathbf{a_1} + (q+t_2)\mathbf{a_2} : 0 \leq t_1,t_2 \leq 1\}$.]{\label{cell_numb}\includegraphics[width=70mm]{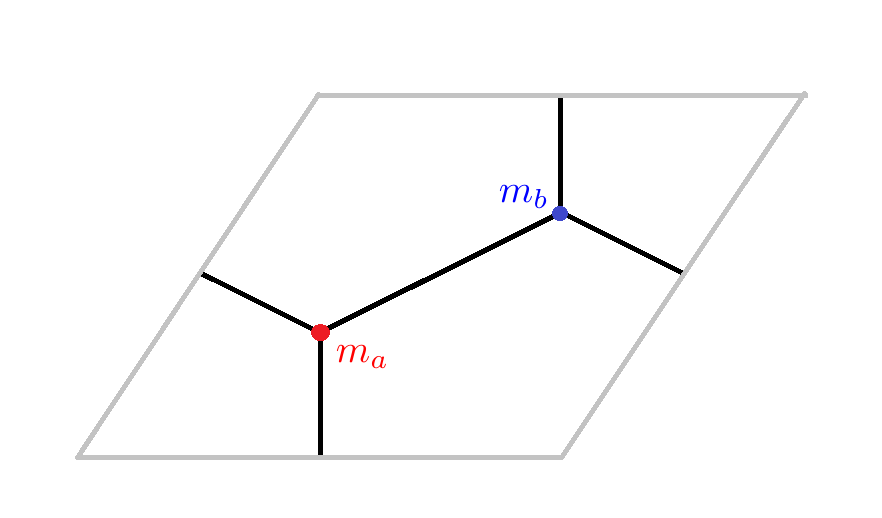}}
\caption{The periodic mechanical system over the honeycomb lattice.}
\label{gencase2d}
\end{figure}

\subsection{Periodic Mechanical System}

\subsubsection{Mathematical Model}

We consider the two-dimensional  mechanical system over the honeycomb lattice as shown in Figure \ref{gen_stru}. Let $\mathbf{a}_1 = a[1, 0]$ and $\mathbf{a}_2 = a[\frac{1}{2}, \frac{\sqrt{3}}{2}]$ be the lattice vectors where $a$ is the lattice constant. Then the honeycomb lattice is given by $\Lambda := \sum\limits_{p,q \in \mathbb{Z}} Y_{p,q}$, where $Y_{p,q} = \{(p+t_1)\mathbf{a}_1 + (q+t_2)\mathbf{a}_2 : 0 \leq t_1,t_2 \leq 1\}$ as shown in Figure \ref{cell_numb}. Each periodic cell contains two masses, $m_a = m (1+\beta)$ and $m_b = m(1-\beta)$ with $-1< \beta < 1$, connected by a spring of the length $\frac{a}{\sqrt{3}}$ and the spring constant $k$. Let $\mathbf{b}_1$ and $\mathbf{b}_2$ be the reciprocal lattice vectors given by $\mathbf{b}_1 = \frac{2\pi}{a}\left[1, -\frac{1}{\sqrt{3}}\right]$ and $\mathbf{b}_2 = \frac{2\pi}{a}\left[0, \frac{2}{\sqrt{3}}\right]$, which satisfy
\begin{align*}
\mathbf{a}_i\cdot \mathbf{b}_j = 2\pi\delta_{ij} = 
\begin{cases}
0, \,\,i\neq j,\\
2\pi,\,\, i=j.
\end{cases}
\end{align*}

\begin{figure}[h]
\centering
\includegraphics[scale=0.5]{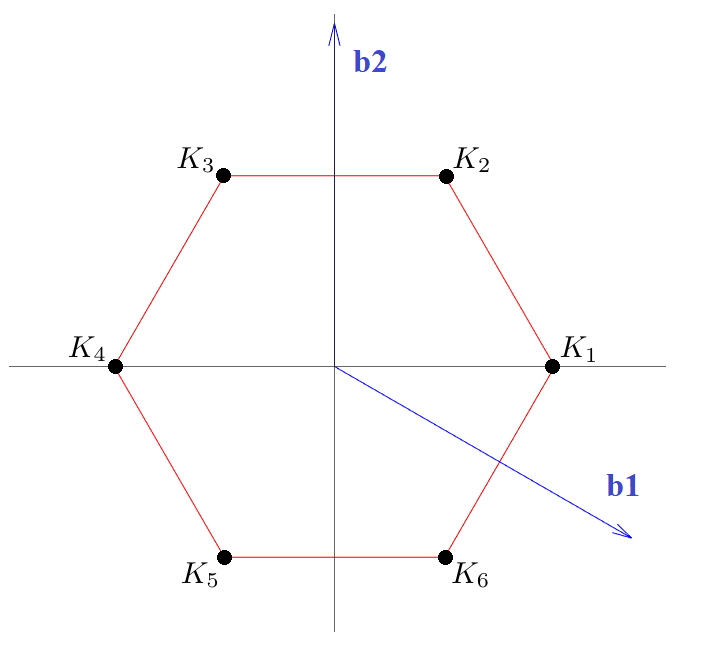}
\caption{The Brillouin zone $\mathbb{B}$ in the reciprocal lattice. The vertices of the Brillouin zone are $K_1 = \left(\frac{4\pi}{3a},0\right)$, $K_2 = \left(\frac{2\pi}{3a},\frac{2\pi}{a\sqrt{3}}\right)$, $K_3 = \left(-\frac{2\pi}{3a},\frac{2\pi}{a\sqrt{3}}\right)$, $K_4 = \left(-\frac{4\pi}{3a},0\right)$, $K_5 = \left(-\frac{2\pi}{3a},-\frac{2\pi}{a\sqrt{3}}\right)$ and 
$K_6 = \left(\frac{2\pi}{3a},-\frac{2\pi}{a\sqrt{3}}\right)$.}
\label{bzvectors}
\end{figure}

The hexagonal shape of the fundamental cell in the reciprocal lattices $ \Lambda^* :=\sum_{p,q \in \mathbb{Z}} p\mathbf{b}_1+q\mathbf{b}_2$, or the Brillouin zone $\mathbb{B}$, is shown in Figure \ref{bzvectors}. Over the periodic cell $Y_{p,q}$, the displacements $U_{p,q}^a(t)$ and $U_{p,q}^b(t)$ for the masses $a$ and $b$ satisfy
\begin{equation} \label{eqn_2d}
\begin{aligned}
m_a(U_{p,q}^a)'' + k\left(3U_{p,q}^a-U_{p,q}^b-U_{p-1,q}^b-U_{p,q-1}^b\right) &= 0,\\
m_b(U_{p,q}^b)'' + k\left(3U_{p,q}^b-U_{p,q}^a-U_{p+1,q}^a-U_{p,q+1}^a\right) &= 0.
\end{aligned}
\end{equation}
Consider the time-harmonic solution in the form of
\begin{align}\label{2dsoln}
U_{p,q}^a(t) = u_{a} \exp\left[i(\omega \tau + \boldsymbol{\kappa} \cdot \mathbf{r}_{p,q})\right]\,\,\,\text{and}\,\,\, U_{p,q}^b(t) = u_{b} \exp\left[i(\omega \tau + \boldsymbol{\kappa} \cdot \mathbf{r}_{p,q})\right],
\end{align}
where $\tau=\sqrt{k/m}t$ is nondimensional time scale, $u_a$ and $u_b$ are the amplitudes of displacements, the position vector $\mathbf{r_{p,q}} = p\mathbf{a}_1 + q\mathbf{a}_2$, the wave vector $\boldsymbol{\kappa} = \kappa_1 \mathbf{b}_1 + \kappa_2 \mathbf{b}_2$, and the wave frequency $\omega$. Then $u_a$ and $u_b$ satisfy
\begin{align*}
(1+\beta) (-\omega^2)u_a + 3u_a +(-1-e^{-i\boldsymbol{\kappa} \cdot \boldsymbol{a}_1}-e^{-i\boldsymbol{\kappa} \cdot \boldsymbol{a}_2})u_b &= 0,\\
(1-\beta)(-\omega^2)u_b + 3u_b + (-1 - e^{i\boldsymbol{\kappa} \cdot \boldsymbol{a}_1}-e^{i\boldsymbol{\kappa} \cdot \boldsymbol{a}_2})u_a &= 0,
\end{align*}
or equivalently,
\begin{align}\label{system}
M\left(\boldsymbol{\kappa}\right)\boldsymbol{v}\left(\boldsymbol{\kappa}\right) = \omega^2\boldsymbol{v}\left(\boldsymbol{\kappa}\right),
\end{align}
wherein 
\begin{align}\label{M_in_chern}
M\left(\boldsymbol{\kappa}\right):=\begin{bmatrix}
\frac{3}{1+\beta} & \frac{\overline{d\left(\kappa\right)}}{1+\beta}\\
\frac{d\left(\kappa\right)}{1-\beta} & \frac{3}{1-\beta}
\end{bmatrix} \,\,\,\text{and}\,\,\,
\boldsymbol{v}\left(\boldsymbol{\kappa}\right) := 
\begin{bmatrix}
u_a\\
u_b
\end{bmatrix}.
\end{align} 
In the above, $d\left(\boldsymbol{\kappa}\right) := -1 - e^{i\boldsymbol{\kappa} \cdot \boldsymbol{a}_1}-e^{i\boldsymbol{\kappa} \cdot \boldsymbol{a}_2} = -1 - e^{i2\pi \kappa_1} - e^{i2\pi \kappa_2}$ and $\overline{d\left(\boldsymbol{\kappa}\right)}$ is the complex conjugate of $d\left(\boldsymbol{\kappa}\right)$. The eigenvalues of the matrix $M\left(\boldsymbol{\kappa}\right)$ are
\begin{align}\label{evalue_c}
\lambda_{\pm}\left(\boldsymbol{\kappa}\right) = \frac{3}{1-\beta^2}\pm \sqrt{\frac{9\beta^2}{\left(1-\beta^2\right)^2} + \frac{|d\left(\boldsymbol{\kappa}\right)|^2}{1-\beta^2}},
\end{align}
with the corresponding eigenvectors
\begin{align}\label{evec_c}
\boldsymbol{v}_\pm\left(\boldsymbol{\kappa}\right) = \frac{1}{\chi\left(\boldsymbol{\kappa}\right)}
\begin{bmatrix}
-\frac{\overline{d\left(\boldsymbol{\kappa}\right)}}{3-\lambda_{\pm}\left(\boldsymbol{\kappa}\right)(1+\beta)}\\
1
\end{bmatrix}.
\end{align}
In the above, $\chi\left(\boldsymbol{\kappa}\right)$ is a normalization constant such that
\begin{align}\label{normalization}
\boldsymbol{v}_\pm^*\left(\boldsymbol{\kappa}\right)
\begin{bmatrix}
1+\beta & 0 \\0 & 1-\beta
\end{bmatrix}
\boldsymbol{v}_\pm\left(\boldsymbol{\kappa}\right) = 1,
\end{align}
where $\boldsymbol{v}_\pm^*\left(\boldsymbol{\kappa}\right)$ is conjugate transpose of $\boldsymbol{v}_\pm\left(\boldsymbol{\kappa}\right)$. In what follows, we use $d$, $M$, $\lambda$ and $\boldsymbol{v}$ instead of $d\left(\boldsymbol{\kappa}\right)$, $M\left(\boldsymbol{\kappa}\right)$ $\lambda\left(\boldsymbol{\kappa}\right)$ and $\boldsymbol{v}\left(\boldsymbol{\kappa}\right)$ for simplicity.

\subsubsection{Dirac Point when $\beta=0$}
\begin{figure}[h]
\centering     
\subfigure[$\beta = 0$]{\label{beta_zero}\includegraphics[width=82mm]{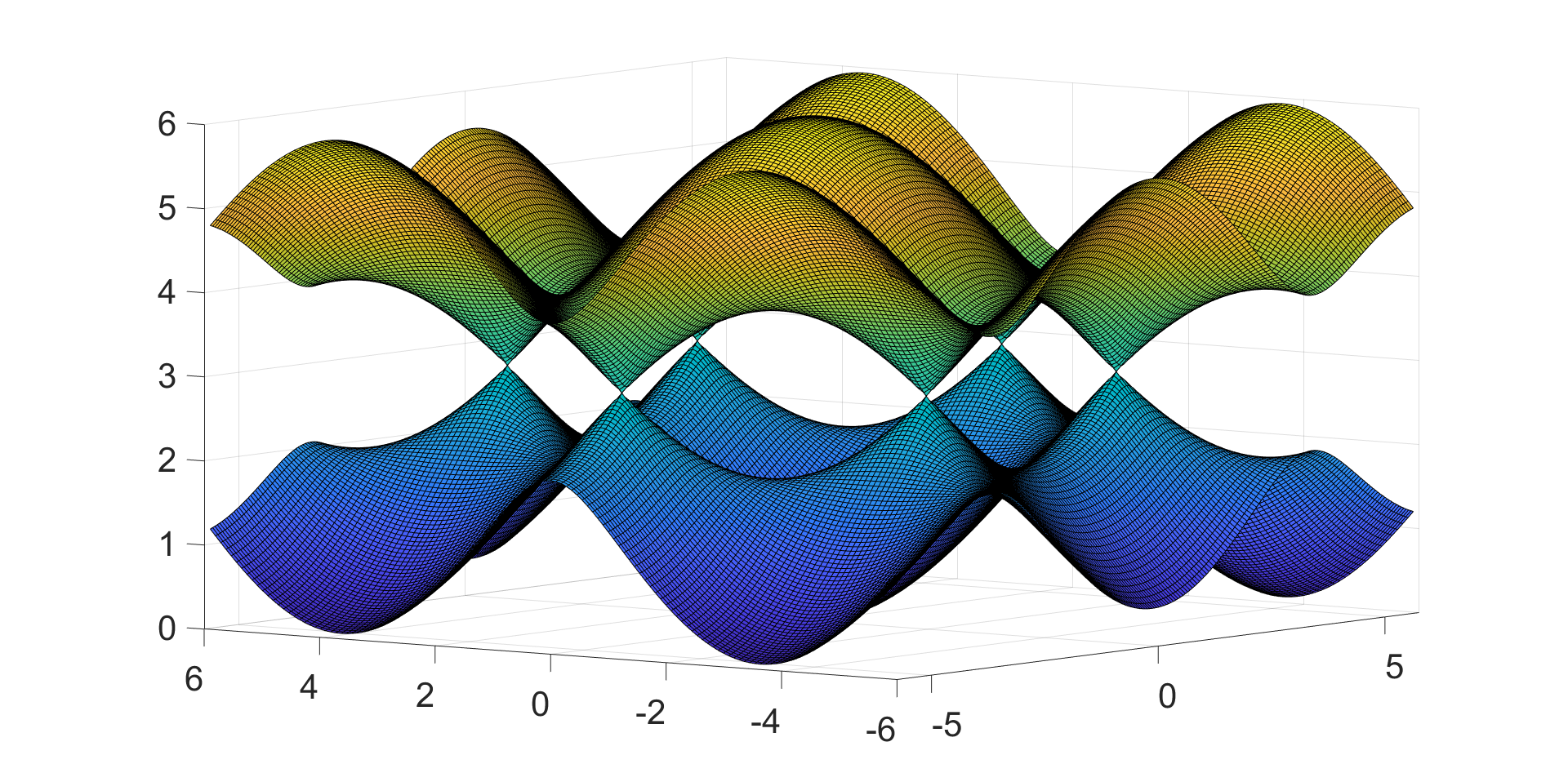}}
\subfigure[$\beta = 0.05$]{\label{beta_nonzero}\includegraphics[width=82mm]{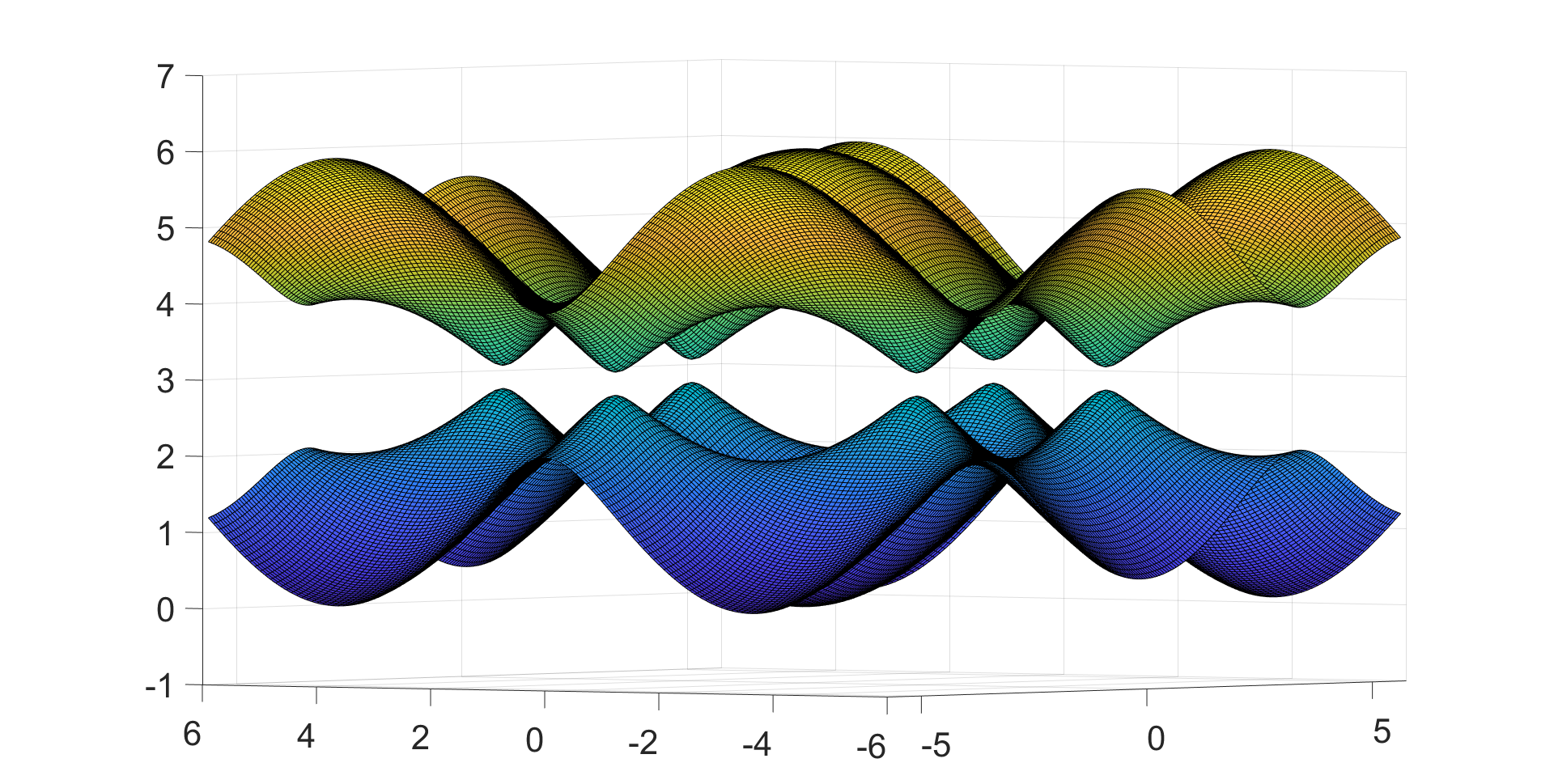}}
\caption{The band structure of the periodic system. (a) $\beta=0$: The upper and lower bands touch at the vertices of Brillouin zone and form the Dirac points. (b) $\beta \neq 0$: A gap opens between the two bands.}
\label{fig:globfig}
\end{figure}

We first study the band structure when the two masses $m_a=m_b$, namely when $\beta=0$. In particular, we show that Dirac point exists at the vertices of the Brillouin zone. A pair $\left(\boldsymbol{\kappa}^*,\lambda^*\right) \in \mathbb{B} \times \mathbb{R}$ is called a Dirac point (cf. \cite{Lin_2022},\cite{lee2019elliptic},\cite{fefferman2018honeycomb}, \cite{fefferman2014wave}) if
\begin{enumerate}
\item $\lambda_+\left(\boldsymbol{\kappa}^*\right)=\lambda_-\left(\boldsymbol{\kappa}^*\right) = \lambda^*$. In addition, there exist constants $\alpha>0$ and $\gamma>0$ such that the expansions
\begin{align*}
\lambda_+\left(\boldsymbol{\kappa}\right) &= \lambda^* + \alpha|\boldsymbol{\kappa}-\boldsymbol{\kappa}^*| + O\left(|\boldsymbol{\kappa}|^2\right),\\
\lambda_-\left(\boldsymbol{\kappa}\right) &= \lambda^* - \alpha|\boldsymbol{\kappa}-\boldsymbol{\kappa}^*| + O\left(|\boldsymbol{\kappa}|^2\right),
\end{align*}
hold for $|\boldsymbol{\kappa} - \boldsymbol{\kappa}^*|<\gamma$.
\item The eigenvalue $\lambda_{\pm}\left(\boldsymbol{\kappa}\right)$ in (\ref{evalue_c}) has multiplicity two when $\boldsymbol{\kappa} = \boldsymbol{\kappa}^*$.
\end{enumerate}

\begin{remark}\label{valley_d}
Observe that if $\boldsymbol{\kappa}_i = K_i$, then $\boldsymbol{\kappa}_1 = \boldsymbol{\kappa}_3+\boldsymbol{b}_1$, $\boldsymbol{\kappa}_3 = \boldsymbol{\kappa}_5+\boldsymbol{b}_2$, $\boldsymbol{\kappa}_2 = \boldsymbol{\kappa}_6+\boldsymbol{b}_2$ and $\boldsymbol{\kappa}_6 = \boldsymbol{\kappa}_4+\boldsymbol{b}_1$. Therefore,
\begin{align}\label{equality_of_d}
d\left(\boldsymbol{\kappa}_1\right) = d\left(\boldsymbol{\kappa}_3\right) = d\left(\boldsymbol{\kappa}_5\right) \,\,\,\text{and}\,\,\,d\left(\boldsymbol{\kappa}_2\right) = d\left(\boldsymbol{\kappa}_4\right) = d\left(\boldsymbol{\kappa}_6\right),
\end{align}
and it is sufficient to study the eigenvalues for $\boldsymbol{\kappa}=K_1$ and $\boldsymbol{\kappa}=K_4$.
\end{remark}
When $\beta=0$, from (\ref{evalue_c}), the eigenvalues of $M$ in (\ref{system}) are
\begin{align*}
\lambda\left(\boldsymbol{\kappa}\right)= 3 \pm \left|d\left(\boldsymbol{\kappa}\right)\right|.
\end{align*}
Observe that
\begin{align*}
d\left(K_1\right) = -1-\exp\left(ia\frac{4\pi}{3a}\right) - \exp\left(ia\frac{\frac{4\pi}{3a} + 0\sqrt{3}}{2}\right)  = 0.
\end{align*}
We obtain $\lambda_+\left(K_1\right) = \lambda_-\left(K_1\right) = 3$ (Figure \ref{beta_zero}). In addition, the derivative of $\lambda_+\left(\boldsymbol{\kappa}\right)$ at $\boldsymbol{\kappa}=K_1$ along the direction $\mathbf{w} = (w_1,w_2)$ is
\begin{align*}
D_{\boldsymbol{w}}\lambda_+\left(K_1\right) &=\lim_{h \to 0^+}\frac{1}{h} \left[\lambda_+\left(\frac{4\pi}{3a} + w_1h, 0 +w_2h\right) - \lambda_+\left(\frac{4\pi}{3a}, 0\right)\right]\\
&=\lim_{h \to 0^+}\sqrt{\frac{1}{h^2} \left[4\cos^2\left[\frac{2\pi}{3}+\frac{w_1a}{2}h\right] + 4 \cos\left[\frac{2\pi}{3}+\frac{w_2a}{2}h\right]+ 1\right]} = \frac{a\sqrt{3}}{2}.
\end{align*}
Similarly,
\begin{align*}
D_{\boldsymbol{w}}\lambda_-\left(K_1\right) = - \frac{a\sqrt{3}}{2}.
\end{align*}
Therefore, near $K_1$, there holds
\begin{align*}
\lambda_{\pm}\left(\boldsymbol{\kappa}\right) &= 3 \pm \frac{a\sqrt{3}}{2} \left|\boldsymbol{\kappa} - K_1\right| + O\left(\left|\boldsymbol{\kappa} - K_1\right|^2\right).
\end{align*}
Following similar calculations, it can be shown that, for $\boldsymbol{\kappa}$ near $K_4$,
\begin{align*}
\lambda_{\pm}\left(\boldsymbol{\kappa}\right) &= 3 \pm \frac{a\sqrt{3}}{2} \left|\boldsymbol{\kappa} - K_4\right| + O\left(\left|\boldsymbol{\kappa} - K_4\right|^2\right).
\end{align*}
Note that have $M\left(\boldsymbol{\kappa}\right) = \begin{bmatrix}
3 & 0 \\
0 & 3
\end{bmatrix}$ for $\boldsymbol{\kappa} = K_1, K_4$. Thus, the multiplicity of $\lambda^*$ is 2. Therefore, $(K_i,3)$ is a Dirac point for $i=1,4$.
\begin{remark}\label{bandgap_remark}
When $\beta \neq 0$, we have $\lambda_-\left(\boldsymbol{\kappa}\right) < \lambda_+\left(\boldsymbol{\kappa}\right)$ for $\boldsymbol{\kappa} \in \mathbb{B}$ and there is a gap between the upper and lower bands $\lambda_\pm\left(\boldsymbol{\kappa}\right)$ in (\ref{evec_c}) which is called band gap.
\end{remark}

\subsubsection{Valley Chern Number}

The Berry phase is a phase angle that describes the global phase change of a complex vector over a closed loop $\nu$ in its parameter space. The Berry phase associated with the band $\lambda$ of the system in (\ref{system}) is defined as a line integral around a closed loop $\nu$ in the Brillouin zone (cf. \cite{vanderbilt});
\begin{align}\label{berry_phase}
\phi = \oint_{\nu} \boldsymbol{B}(\boldsymbol{\kappa})  d\boldsymbol{\kappa}.
\end{align}
In the above, $\boldsymbol{B}(\boldsymbol{\kappa}) = (A_{\kappa_x}\left(\boldsymbol{\kappa}\right),A_{\kappa_y}\left(\boldsymbol{\kappa}\right))$ is the Berry connection, wherein 
\begin{align*}
A_j(\boldsymbol{\kappa}) := \langle\boldsymbol{v}(\boldsymbol{\kappa}),i\partial_j \boldsymbol{v}(\boldsymbol{\kappa})\rangle,\,\,\, j = \kappa_x, \kappa_y,
\end{align*}
and $\boldsymbol{v}$ is the eigenvector of $M$ associated with the eigenvalue $\lambda$ as defined in (\ref{evec_c}). By the Stokes' theorem,
\begin{align}\label{berry_curv}
\phi = \int_{D} \Omega(\boldsymbol{\kappa}) dS,
\end{align}
where $D$ is the region enclosed by $\nu$ and $\Omega(\boldsymbol{\kappa})$ is Berry curvature given by $\Omega\left(\boldsymbol{\kappa}\right) = \partial_{\kappa_x} A_{\kappa_y}\left(\boldsymbol{\kappa}\right) - \partial_{\kappa_y}A_{\kappa_x}\left(\boldsymbol{\kappa}\right)$. The valley Chern number for a Bloch wave vector $\boldsymbol{\kappa} = K$ is defined as Berry phase calculated over a closed loop $\nu$ containing $K$ scaled by $2\pi$ (cf. \cite{xiao2007valley}), i.e.,
\begin{align}\label{valley_defn}
C_{K,\nu} = \frac{1}{2\pi}\left(\oint_{\nu} B(\boldsymbol{\kappa}) d\boldsymbol{\kappa} \pmod{2\pi} \right) = \frac{1}{2\pi}\int_{D} \Omega(\boldsymbol{\kappa}) dS.
\end{align}
\begin{remark}
For the eigenvector $\boldsymbol{v}$ of the system in (\ref{system}), a gauge transformation $\tilde{\boldsymbol{v}}\left(\boldsymbol{\kappa}\right) = e^{-i\varphi\left(\boldsymbol{\kappa}\right)}\boldsymbol{v}\left(\boldsymbol{\kappa}\right)$ for a differentiable function $\varphi\left(\boldsymbol{\kappa}\right)$ gives that $\tilde{\boldsymbol{B}}\left(\boldsymbol{\kappa}\right) = \boldsymbol{B}\left(\boldsymbol{\kappa}\right) + \nabla \varphi\left(\boldsymbol{\kappa}\right)$. Hence the Berry phase $\tilde{\phi} = \phi+2\pi m$, for some $m \in \mathbb{Z}$, but $\tilde{\Omega}\left(\boldsymbol{\kappa}\right) = \Omega\left(\boldsymbol{\kappa}\right)$ since $\nabla \times\nabla \varphi = 0$. As such (\ref{valley_defn}) is defined with modulo $2\pi$.
\end{remark}
Let $\nu$ be a circle centered at $K$ with radius $0 < r \ll 1$, we define the discrete valley Chern number as, for $N \in \mathbb{Z}^+$,
\begin{align}\label{discrete_chern_formula}
C^N_{K,\nu} = -\frac{1}{2\pi}\left[Im\left( \sum_{j=1}^{N} \log \langle\boldsymbol{v}\left(K+\boldsymbol{\kappa}_j\right), \boldsymbol{v}\left(K+\boldsymbol{\kappa}_{j+1}\right)\rangle \right)\pmod{2\pi}\right],
\end{align}
where $\boldsymbol{\kappa}_j = r\left(\cos\left(\theta_j\right),\sin\left(\theta_j\right)\right)$, $\theta_j = -\pi+(j-1)\frac{2\pi}{N}$. It is clear that
\begin{align*}
C_{K,\nu} = \lim\limits_{N \to \infty} C_{K,\nu}^N.
\end{align*}
In what follows, we use $C_K$ instead of $C_{K,\nu}$ for simplicity.

\begin{lemma}\label{chern_lemma}\leavevmode
For $N = 2n \in \mathbb{Z}^+$,
\begin{enumerate}
\item The Berry phase $\phi$ over the Brillouin zone is zero.
\item The valley Chern numbers for the vertices of Brillouin zone satisfy
\begin{align*}
C^N_{K_1} = C^N_{K_3} = C^N_{K_5}\,\,\, \text{and}\,\,\, C^N_{K_2}=C^N_{K_4}=C^N_{K_6}.
\end{align*}
In addition, the valley Chern numbers satisfy $C^N_{K_1} = -C^N_{K_4}$.
\item Let $\boldsymbol{v}_\pm$ be the eigenvectors of M defined in (\ref{evec_c}). If $\boldsymbol{v} = \boldsymbol{v}_+$, $\beta$ and $C_{K_1}^N$ attain opposite signs, and if $\boldsymbol{v} = \boldsymbol{v}_-$, $\beta$ and $C_{K_1}^N$ attain the same signs, where $\boldsymbol{v}_\pm$ is the vector given in (\ref{evec_c}).
\end{enumerate}
\end{lemma}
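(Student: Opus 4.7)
\textbf{Part 1 (Berry phase over $\mathbb{B}$ vanishes).} My plan is to exploit a time-reversal symmetry. From $d(\boldsymbol{\kappa})=-1-e^{i2\pi\kappa_1}-e^{i2\pi\kappa_2}$ one reads $d(-\boldsymbol{\kappa})=\overline{d(\boldsymbol{\kappa})}$, so by (\ref{M_in_chern}) we have $M(-\boldsymbol{\kappa})=\overline{M(\boldsymbol{\kappa})}$. Taking complex conjugates of $M\boldsymbol{v}=\lambda\boldsymbol{v}$ (and using that $\lambda$ is real) shows we may choose $\boldsymbol{v}(-\boldsymbol{\kappa})=\overline{\boldsymbol{v}(\boldsymbol{\kappa})}$, and this choice is consistent with the normalization (\ref{normalization}) since the weight $\mathrm{diag}(1+\beta,1-\beta)$ is real. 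A short chain-rule calculation on $A_j=\langle \boldsymbol{v},i\partial_j\boldsymbol{v}\rangle$ then yields $\boldsymbol{B}(-\boldsymbol{\kappa})=\boldsymbol{B}(\boldsymbol{\kappa})$ and consequently $\Omega(-\boldsymbol{\kappa})=-\Omega(\boldsymbol{\kappa})$. Since $\mathbb{B}$ is centrally symmetric, $\int_{\mathbb{B}}\Omega\,dS=0$, which by (\ref{berry_curv}) is the claimed Berry phase over $\mathbb{B}$.

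\textbf{Part 2 (symmetries among the $C^N_{K_i}$).} The equalities $C^N_{K_1}=C^N_{K_3}=C^N_{K_5}$ and $C^N_{K_2}=C^N_{K_4}=C^N_{K_6}$ are immediate from Remark \ref{valley_d}: the vertices inside each triple differ by reciprocal lattice vectors, $d$ is invariant under such translations, so $M$ and every eigenvector agree term by term at the sampling points $K_i+\boldsymbol{\kappa}_j$ on the three circles, making the discrete sums in (\ref{discrete_chern_formula}) identical. For $C^N_{K_1}=-C^N_{K_4}$ I would use $K_4=-K_1$ together with the antisymmetry $\Omega(-\boldsymbol{\kappa})=-\Omega(\boldsymbol{\kappa})$ from Part 1: reparametrizing the circle around $K_4$ via $\boldsymbol{\kappa}\mapsto-\boldsymbol{\kappa}$ turns its discrete sum into the negative of the one at $K_1$. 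Alternatively, decomposing $\mathbb{B}$ into three equivalent rhombic fundamental domains, each containing exactly one $K_1$-type and one $K_4$-type vertex, Part 1 gives $3(C^N_{K_1}+C^N_{K_4})=0$.

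\textbf{Part 3 (sign of $C^N_{K_1}$).} I would reduce the problem near $K_1$ to a gapped two-band Dirac system. Conjugating by $D=\mathrm{diag}(\sqrt{1+\beta},\sqrt{1-\beta})$ produces the Hermitian matrix
\begin{align*}
\tilde M(\boldsymbol{\kappa})=DM(\boldsymbol{\kappa})D^{-1}=\begin{bmatrix}\dfrac{3}{1+\beta} & \dfrac{\overline{d(\boldsymbol{\kappa})}}{\sqrt{1-\beta^2}}\\[3pt] \dfrac{d(\boldsymbol{\kappa})}{\sqrt{1-\beta^2}} & \dfrac{3}{1-\beta}\end{bmatrix},
\end{align*}
whose normalized eigenvectors $\tilde{\boldsymbol{v}}=D\boldsymbol{v}$ are standard unit vectors under (\ref{normalization}). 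Since $D$ is $\boldsymbol{\kappa}$-independent, this conjugation leaves the Berry curvature unchanged. Decomposing $\tilde M=d_0 I+d_x\sigma_x+d_y\sigma_y+d_z\sigma_z$ gives $d_z=-\frac{3\beta}{1-\beta^2}$, while $(d_x,d_y)$ is proportional to $(\mathrm{Re}\,d,\mathrm{Im}\,d)$. The Dirac-point analysis preceding the lemma shows that near $K_1$ the map $\boldsymbol{\kappa}\mapsto d(\boldsymbol{\kappa})$ vanishes linearly with nonzero Jacobian $J_{K_1}$, so $\tilde M$ realizes a gapped Dirac cone of fixed chirality and mass proportional to $-\beta$. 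Applying the standard two-band formula $\Omega_\pm=\mp\tfrac{1}{2}|\boldsymbol{d}|^{-3}\boldsymbol{d}\cdot(\partial_{\kappa_x}\boldsymbol{d}\times\partial_{\kappa_y}\boldsymbol{d})$ and integrating over a small disk about $K_1$ in the small-gap limit gives $\int_{\mathrm{disk}}\Omega_\pm\,dS\to\mp\pi\,\mathrm{sgn}(d_z)\,\mathrm{sgn}(J_{K_1})$. Dividing by $2\pi$ and using $\mathrm{sgn}(d_z)=-\mathrm{sgn}(\beta)$ yields the required relation: $C^N_{K_1}$ has sign opposite to $\beta$ on the $\boldsymbol{v}_+$ band and equal to $\beta$ on the $\boldsymbol{v}_-$ band.

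\textbf{Main obstacle.} The principal difficulty will be the sign bookkeeping in Part 3: one must verify carefully that the $D$-conjugation is consistent with the weighted normalization (\ref{normalization}) so that the Berry connections associated to $\boldsymbol{v}$ and $\tilde{\boldsymbol{v}}$ do coincide, and one must compute the chirality $\mathrm{sgn}(J_{K_1})$ in the physical $(\kappa_x,\kappa_y)$ coordinates (rather than in the lattice coordinates $(\kappa_1,\kappa_2)$ in which $d$ is most naturally written) in order to pin down the overall sign unambiguously.
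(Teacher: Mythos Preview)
Your Parts 1 and 2 are essentially the paper's own argument, phrased in the continuous language of Berry curvature rather than in the discrete sums of (\ref{discrete_chern_formula}). Both rest on the single observation $d(-\boldsymbol{\kappa})=\overline{d(\boldsymbol{\kappa})}$, hence $\boldsymbol{v}(-\boldsymbol{\kappa})=\overline{\boldsymbol{v}(\boldsymbol{\kappa})}$. The paper simply pairs the sampling points $\hat{\boldsymbol{\kappa}}_j\leftrightarrow\hat{\boldsymbol{\kappa}}_{j+n}=-\hat{\boldsymbol{\kappa}}_j$ (respectively $K_4+\boldsymbol{\kappa}_{j\pm n}=-(K_1+\boldsymbol{\kappa}_j)$) and notes that the two halves of the discrete sum are complex conjugates, so the imaginary part cancels (respectively flips sign). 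Your reparametrisation $\boldsymbol{\kappa}\mapsto-\boldsymbol{\kappa}$ and your oddness of $\Omega$ encode exactly this pairing.

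Part 3 is where you genuinely diverge. The paper stays with the discrete formula: it writes $\langle\boldsymbol{v}(\boldsymbol{\kappa}_j),\boldsymbol{v}(\boldsymbol{\kappa}_{j+1})\rangle$ explicitly as $c_j\bigl(r_j(\beta)e^{i(\xi_j-\xi_{j+1})}+1\bigr)$ with $\xi_j=\arg d(\boldsymbol{\kappa}_j)$, applies Lemma~\ref{arg} term by term to bound each phase increment by $\tfrac{1}{2}(\xi_j-\xi_{j+1})$, and then shows that $d(\theta)$ winds once around the origin as $\theta$ traverses the small circle, so the telescoping bound gives $\mathrm{Im}(\sum)\in(0,\pi)$ for $\beta>0$ and $\in(\pi,3\pi)$ for $\beta<0$. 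Your route---conjugating to a Hermitian $\tilde M$, expanding in Pauli matrices, and invoking the textbook two-band curvature $\Omega_\pm=\mp\tfrac12|\boldsymbol d|^{-3}\boldsymbol d\cdot(\partial_{\kappa_x}\boldsymbol d\times\partial_{\kappa_y}\boldsymbol d)$---is more conceptual and immediately explains \emph{why} the sign is governed by $\mathrm{sgn}(\beta)$ times the chirality at $K_1$. Two points to tighten if you pursue it: (a) the lemma is stated for the \emph{discrete} quantity $C^N_{K_1}$ at every even $N$, so you should either pass through the discrete sum or argue that your sign conclusion for $\int_{\mathrm{disk}}\Omega$ transfers to $C^N$ for all $N$; (b) your phrase ``in the small-gap limit'' undersells the argument---for the massive Dirac model the disk integral equals $\pm\bigl(\pi\,\mathrm{sgn}(d_z)-\pi d_z/\sqrt{R^2+d_z^2}\bigr)$, whose sign is $\pm\,\mathrm{sgn}(d_z)$ for \emph{every} $R>0$, so no smallness of $\beta$ is needed once the linearisation near $K_1$ is justified for the given radius $r\ll1$. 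The paper's elementary argument avoids both issues at the cost of the explicit computation and the separate winding-number verification for $d$.
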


\begin{proof}
$(i)$ For $N = 2n$ with $n \in \mathbb{Z}^+$, let $\left\{\boldsymbol{\hat{\kappa}}_j\right\}_{j=1}^N$ be equally spaced points on the boundary of Brillouin zone such that $\{K_1,K_2,...K_6\} \subset \left\{\boldsymbol{\hat{\kappa}}_j\right\}_{j=1}^N$. Then, for $j=1,2,...,n+1$, we have 
\begin{align*}
\hat{\boldsymbol{\kappa}}_j = -\hat{\boldsymbol{\kappa}}_{n+j},
\end{align*}
which implies
\begin{align*}
d\left(\hat{\boldsymbol{\kappa}}_{j+n}\right) = \overline{d\left(\hat{\boldsymbol{\kappa}}_{j}\right)}
\end{align*}
and thus
\begin{align*}
\boldsymbol{v}\left(\hat{\boldsymbol{\kappa}}_{j\pm n}\right) = \overline{\boldsymbol{v}\left(\hat{\boldsymbol{\kappa}}_{j}\right)}
\end{align*}
for $1\leq j \leq n+1$ in $\boldsymbol{v}\left(\hat{\boldsymbol{\kappa}}_{j+ n}\right)$ and for $n+2\leq j \leq N$ in $\boldsymbol{v}\left(\hat{\boldsymbol{\kappa}}_{j- n}\right)$. Thus we have
\begin{align*}
C^N &= -Im\left(\sum_{j=1}^N \log\langle\mathbf{v}\left(\hat{\boldsymbol{\kappa}}_{j}\right)|\mathbf{v}\left(\hat{\boldsymbol{\kappa}}_{j+1}\right)\rangle\right)\\
&= -Im\left(\sum_{j=1}^{n} \log\langle\mathbf{v}\left(\hat{\boldsymbol{\kappa}}_{j}\right)|\mathbf{v}\left(\hat{\boldsymbol{\kappa}}_{j+1}\right)\rangle + \sum_{j=n+1}^N \log\langle\mathbf{v}\left(\hat{\boldsymbol{\kappa}}_{j}\right)|\mathbf{v}\left(\hat{\boldsymbol{\kappa}}_{j+1}\right)\rangle\right)\\
&= -Im\left(\sum_{j=1}^{n} \log\langle\mathbf{v}\left(\hat{\boldsymbol{\kappa}}_{n+j}\right)^*|\mathbf{v}\left(\hat{\boldsymbol{\kappa}}_{j}\right)_{N/2+j+1}^*\rangle + \sum_{j=n+1}^N \log\langle\mathbf{v}\left(\hat{\boldsymbol{\kappa}}_{j}\right)|\mathbf{v}\left(\hat{\boldsymbol{\kappa}}_{j+1}\right)\rangle\right)\\
&= -Im\left(\sum_{n=n+1}^{N} \log\langle\mathbf{v}\left(\hat{\boldsymbol{\kappa}}_{n}\right)^*|\mathbf{v}\left(\hat{\boldsymbol{\kappa}}_{n+1}\right)^*\rangle + \sum_{j=n+1}^N \log\langle\mathbf{v}\left(\hat{\boldsymbol{\kappa}}_{j}\right)|\mathbf{v}\left(\hat{\boldsymbol{\kappa}}_{j+1}\right)\rangle\right)\\
&= -Im\left(\sum_{n=n+1}^{N}\overline{ \log\langle\mathbf{v}\left(\hat{\boldsymbol{\kappa}}_{n}\right)|\mathbf{v}\left(\hat{\boldsymbol{\kappa}}_{n+1}\right)\rangle} + \sum_{j=n+1}^N \log\langle\mathbf{v}\left(\hat{\boldsymbol{\kappa}}_{j}\right)|\mathbf{v}\left(\hat{\boldsymbol{\kappa}}_{j+1}\right)\rangle\right)\\
&=0.
\end{align*}
$(ii)$ By Remark \ref{valley_d}, $d\left(K_1\right) = d\left(K_3\right) =d\left(K_5\right)$. Therefore, $\mathbf{v}\left(K_1\right) = \mathbf{v}\left(K_3\right) = \mathbf{v}\left(K_5\right)$ which implies $C^N_{K_1} = C^N_{K_3} = C^N_{K_5}$. Similarly, $C^N_{K_2} = C^N_{K_4} = C^N_{K_6}$. Let $\left\{\boldsymbol{\kappa}_j\right\}_{j=1}^N$ be
the equally spaced points on the circle as given in the definition of the discrete valley Chern number, wherein $N = 2n$ for some $n \in \mathbb{Z}^+$. Note that $arg\left(\boldsymbol{\kappa}_{j+n}\right) = arg\left(\boldsymbol{\kappa}_{j}\right)+\pi$ for $j = 1,2,...,n$, thus
\begin{align}\label{circ_valley}
K_4+\boldsymbol{\kappa}_{j+n} = -\left(K_1 + \boldsymbol{\kappa}_{j}\right)\,\,\,\text{for}\,\,\,1\leq j \leq n\,\,\,\text{and}\,\,\,K_4 + \boldsymbol{\kappa}_{j-n} = -\left(K_1+\boldsymbol{\kappa}_{j}\right)\,\,\,\text{for}\,\,\,n+1\leq j \leq N.
\end{align}
Then,
\begin{align*}
d\left(K_4+\boldsymbol{\kappa}_{j+n}\right) = \overline{d\left(K_1+\boldsymbol{\kappa}_{j}\right)}\,\,\,\text{and}\,\,\,d\left(K_4+\boldsymbol{\kappa}_{j-n}\right) = \overline{d\left(K_1+\boldsymbol{\kappa}_{j}\right)},
\end{align*}
which implies 
\begin{align*}
\boldsymbol{v}\left(K_4+\boldsymbol{\kappa}_{j\pm n}\right) = \overline{\boldsymbol{v}\left(K_1+\boldsymbol{\kappa}_{j}\right)}\,\,\,\text{for}\,\,\,1\leq j \leq n,\,\,\,\text{and}\,\,\,n+1\leq j \leq N.
\end{align*}
Then 
\begin{align*}
C_4^N &= -Im \left(\sum_{j=1}^{2n} \log\langle\boldsymbol{v}\left(K_4 +\boldsymbol{\kappa}_j\right)| \boldsymbol{v}\left(K_4 +\boldsymbol{\kappa}_{j+1}\right)\rangle\right)\\
&=-Im \left(\sum_{j=1}^{n} \log\langle\boldsymbol{v}\left(K_4 +\boldsymbol{\kappa}_j\right)| \boldsymbol{v}\left(K_4 +\boldsymbol{\kappa}_{j+1}\right)\rangle + \sum_{j=n+1}^{2n}\log\langle\boldsymbol{v}\left(K_4 +\boldsymbol{\kappa}_j\right)| \boldsymbol{v}\left(K_4 +\boldsymbol{\kappa}_{j+1}\right)\rangle \right)\\
&=-Im \left(\sum_{m=n+1}^{2n} \log\langle\boldsymbol{v}\left(K_4 +\boldsymbol{\kappa}_{m-n}\right)| \boldsymbol{v}\left(K_4 +\boldsymbol{\kappa}_{m-n+1}\right)\rangle + \sum_{m=1}^{n}\log\langle\boldsymbol{v}\left(K_4 +\boldsymbol{\kappa}_{m+n}\right)| \boldsymbol{v}\left(K_4 +\boldsymbol{\kappa}_{m+n+1}\right)\rangle \right)\\
&=-Im \left(\sum_{m=n+1}^{2n} \log\langle\boldsymbol{v}^*\left(K_1 +\boldsymbol{\kappa}_{m}\right)| \boldsymbol{v}^*\left(K_1 +\boldsymbol{\kappa}_{m+1}\right)\rangle + \sum_{m=1}^{n}\log\langle\boldsymbol{v}^*\left(K_1 +\boldsymbol{\kappa}_{m}\right)| \boldsymbol{v}^*\left(K_1 +\boldsymbol{\kappa}_{m+1}\right)\rangle \right)\\
&=-Im \left(\left[\sum_{m=1}^{2n} \log\langle\boldsymbol{v}\left(K_1 +\boldsymbol{\kappa}_{m}\right)| \boldsymbol{v}\left(K_1 +\boldsymbol{\kappa}_{m+1}\right)\rangle\right]^*\right) = -C_1^N.
\end{align*}\\
$(iii)$ Let $\left\{\boldsymbol{\kappa}_j\right\}_{j=1}^N$ be as in $(ii)$. Then we have
\begin{align*}
\langle\boldsymbol{v}\left(\boldsymbol{\kappa}_j\right)|\boldsymbol{v}\left(\boldsymbol{\kappa}_{j+1}\right)\rangle = \left(1-\beta\right) |d\left(\boldsymbol{\kappa}_j\right)||d\left(\boldsymbol{\kappa}_{j+1}\right)|f_j\left(\beta\right) \left[r_j\left(\beta\right) e^{i\left(\xi_j - \xi_{j+1}\right)} +1 \right],
\end{align*}
where
\begin{align*}
f_j\left(\beta\right) &= \frac{\left(3\beta+\sqrt{9\beta^2+\left(1-\beta^2\right)|d\left(\boldsymbol{\kappa}_j\right)|^2}\right)\left(3\beta+\sqrt{9\beta^2+\left(1-\beta^2\right)|d\left(\boldsymbol{\kappa}_{j+1}\right)|^2}\right)}{|d\left(\boldsymbol{\kappa}_j\right)||d\left(\boldsymbol{\kappa}_{j+1}\right)|},\\
r_j\left(\beta\right) &= \frac{1-\beta^2}{f_j\left(\beta\right)},\,\,\, \xi_j = arg\left(d\left(\boldsymbol{\kappa}_j\right)\right).
\end{align*}
For $\beta>0$, $f_j\left(\beta\right) \in [1,\frac{36}{|d\left(\boldsymbol{\kappa}_j\right)||d\left(\boldsymbol{\kappa}_{j+1}\right)|}]$ is an increasing function of $\beta$ and
\begin{align*}
\sqrt{1-\beta^2}\left|d\left(\boldsymbol{\kappa}_j\right)\right| > 3\beta + \sqrt{9\beta^2 + \left(1-\beta^2\right)\left|d\left(\boldsymbol{\kappa}_j\right)\right|^2},
\end{align*}
thus $r_j\left(\beta\right) \in [0,1]$. Consequently,
\begin{align}\label{dot_chern}
Im\left( \sum_{j=1}^{N} \log \langle\boldsymbol{v}\left(\boldsymbol{\kappa}_j\right), \boldsymbol{v}\left(\boldsymbol{\kappa}_{j+1}\right)\rangle\right) = Im\left( \sum_{j=1}^{N} \log \left(r_j\left(\beta\right)e^{i\left(\xi_j-\xi_{j+1}\right)} + 1\right) \right) = \sum_{j=1}^{N} \varepsilon_j,
\end{align}
where $\varepsilon_j \in \left(0,\frac{\xi_j-\xi_{j+1}}{2}\right)$ and we have used Lemma~\ref{arg}. Denoting $d\left(\boldsymbol{\kappa}\left(\theta\right)\right)$ as $d\left(\theta\right)$, it follows that
\begin{align*}
d\left(-\frac{\pi}{3}-\tilde\theta\right) = \overline{d\left(-\frac{\pi}{3}+\tilde\theta\right)}\,\,\,\text{and}\,\,\,d\left(\frac{2\pi}{3}-\tilde\theta\right) = \overline{d\left(\frac{2\pi}{3}+\tilde\theta\right)},
\end{align*}
for $\tilde\theta \in \left[0,\frac{2\pi}{3}\right]$ and $\tilde\theta \in \left[0,\frac{\pi}{3}\right]$ respectively. In addition, $d(-\pi/3)d(2\pi/3) < 0$. Therefore, as $\theta \in [-\pi,\pi]$, $d\left(\theta\right)$ surrounds the origin on the complex plane and $arg\left(d\left(-\pi\right)\right) = arg\left(d\left(\pi\right)\right)+2\pi$. Then, by (\ref{dot_chern}),
\begin{align*}
Im\left( \sum_{j=1}^{N} \log \langle\boldsymbol{v}\left(\boldsymbol{\kappa}_j\right), \boldsymbol{v}\left(\boldsymbol{\kappa}_{j+1}\right)\rangle\right) < \sum_{j=1}^{N} \frac{\xi_j - \xi_{j+1}}{2} = \frac{\xi_1-\xi_{N+1}}{2} = \pi.
\end{align*}
Therefore,
\begin{align}\label{sum_pos_beta}
Im\left( \sum_{j=1}^{N} \log \langle\boldsymbol{v}\left(\boldsymbol{\kappa}_j\right), \boldsymbol{v}\left(\boldsymbol{\kappa}_{j+1}\right)\rangle\right) \in \left(0,\pi\right).
\end{align}
For $\beta<0$, there holds $r_j\left(\beta\right) > 1$. By similar calculations, we obtain
\begin{align}\label{sum_neg_beta}
Im\left( \sum_{j=1}^{N} \log \langle\boldsymbol{v}\left(\boldsymbol{\kappa}_j\right), \boldsymbol{v}\left(\boldsymbol{\kappa}_{j+1}\right)\rangle\right) \in \left(\pi,3\pi\right).
\end{align}
The statement $(iii)$ follows by (\ref{discrete_chern_formula}), (\ref{sum_pos_beta}) and (\ref{sum_neg_beta}).
\end{proof}

\subsection{Edge Modes for the Topological Mechanical System}\label{interface_mode}

\begin{figure}[h]
\centering
\includegraphics[scale=0.4]{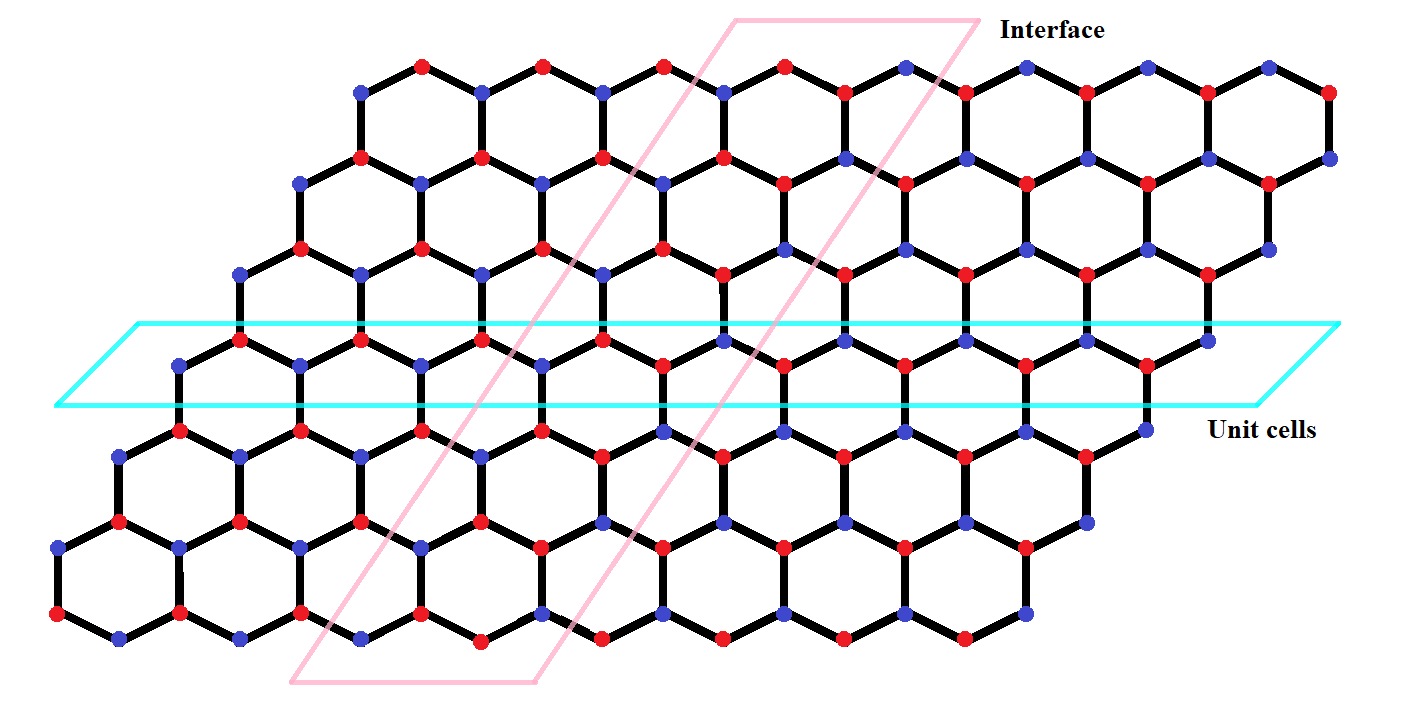}
\caption{The topological mechanical system formed by two hexagonal lattices with opposite $\beta$ values. The interface direction is parallel to $\mathbf{a}_2$.}
\label{gencase_2d_interface}
\end{figure}

We consider a joint system formed by gluing two periodic hexagonal lattices with opposite $\beta$ values. It forms an interface parallel to $\mathbf{a}_2$ where two identical masses are connected as shown in Figure \ref{gencase_2d_interface}. For $p\in \mathbb{Z}^+$ and $q \in \mathbb{Z} $, the governing equations for the displacements of the masses located at the left and right side of the interface are
\begin{equation} \label{eqn_2d_left}
\begin{aligned}
m_a \left(U_{-p,q}^a\right)'' + k\left(U_{-p,q}^a - U_{-p,q}^b\right) - k\left(U_{-p,q}^a - U_{-(p-1),q}^b\right) - k\left(U_{-p,q}^a - U_{-p,q+1}^b\right)&=0,\\
m_b \left(U_{-(p-1),q}^b\right)'' + k\left(U_{-(p-1),q}^b - U_{-(p-1),q-1}^a\right) - k\left(U_{-(p-1),q}^b - U_{-(p-1),q}^a\right) - k\left(U_{-(p-1),q}^b - U_{-p,q}^a\right)&=0,\\
\end{aligned}
\end{equation}
and
\begin{equation} \label{eqn_2d_right}
\begin{aligned}
m_a \left(U_{p,q}^a\right)'' + k\left(U_{p,q}^a - U_{p-1,q}^b\right)+ k\left(U_{p,q}^a - U_{p,q-1}^b\right)+ k\left(U_{p,q}^a - U_{p,q}^b\right) &= 0, \\
m_b \left(U_{p-1,q}^b\right)'' + k\left(U_{p-1,q}^b - U_{p-1,q}^a\right)+ k\left(U_{p-1,q}^b - U_{p,q}^a\right)+ k\left(U_{p-1,q}^b - U_{p-1,q+1}^a\right) &= 0,
\end{aligned}
\end{equation}
respectively. The governing equations for the displacement of the masses located at the interface are
\begin{equation} \label{eqn_2d_middle}
\begin{aligned}
m_a \left(U_{1,q}^a\right)'' + k \left(U_{1,q}^a - U_{-1,q}^a\right) + k \left(U_{1,q}^a - U_{1,q-1}^b\right) + k\left(U_{1,q}^a - U_{1,q}^b\right) &= 0,\\
m_a \left(U_{-1,q}^a\right)'' + k \left(U_{-1,q}^a - U_{1,q}^a\right) + k \left(U_{-1,q}^a - U_{-1,q+1}^b\right) + k\left(U_{-1,q}^a - U_{-1,q}^b\right) &= 0.
\end{aligned}
\end{equation}

For each $k_{||} \in [0,\frac{2\pi}{a}]$, we consider the solutions that propagate along the interface and decay along the horizontal direction by letting
\begin{align}\label{general_soln}
U_{p,q}^a(t) = u^a_p e^{i\omega\gamma}e^{ik_{||}q}\,\,\, \text{and}\,\,\, U_{p,q}^b(t) = u^b_p e^{i\omega\gamma}e^{ik_{||}q},
\end{align}
where $\gamma = \sqrt{k/m}t$. An edge mode $(U_{p,q}^a(t),U_{p,q}^b(t))$ is the nontrivial solution to (\ref{eqn_2d_left}) - (\ref{eqn_2d_middle}) which decays to zero as $p \to \infty$.

We aim to show that there exist edge modes for the joint system in Figure \ref{gencase_2d_interface} when $\beta\neq 0$. For each $k_{||} \in [0,\frac{2\pi}{a}]$, such edge mode attains an eigenfrequency $\omega\left(k_{||}\right)$ such that $\omega^2\left(k_{||}\right)$ located in the band gap $(\lambda_-\left(k_{||}\right),\lambda_+\left(k_{||}\right))$, where
\begin{align*}
\lambda_-\left(k_{||}\right) = \max_{\tilde{\boldsymbol{\kappa}}}\lambda_{-}\left(\tilde{\boldsymbol{\kappa}}\right)\,\,\,\text{and}\,\,\,\lambda_+\left(k_{||}\right) = \min_{\tilde{\boldsymbol{\kappa}}}\lambda_{+}\left(\tilde{\boldsymbol{\kappa}}\right)).
\end{align*}
In the above, $\tilde{\boldsymbol{\kappa}} = k_1 \boldsymbol{b}_1 + \frac{k_{||}a^2}{2\pi}\boldsymbol{b}_2$ and $k_1 \in (0,1)$.
\begin{remark}\label{opp_sign_beta}
For $k_{||} \in [0,\frac{2\pi}{a}]$, $\lambda_-\left(k_{||}\right)$ and $\lambda_+\left(k_{||}\right)$ occur when
\begin{align*}
\tilde{\boldsymbol{\kappa}} = \left(\frac{n}{2}+\frac{k_{||}a^2}{4\pi}\right)\boldsymbol{b}_1 + \frac{ k_{||}a^2}{2\pi}\boldsymbol{b}_2,\,\,\,\text{for}\,\,\,n \in \{0,1\},
\end{align*}
with $(-1)^n\cos\left(\frac{k_{||}a^2}{2}\right) < 0$ and we have
\begin{align}\label{min_max_eval}
\lambda_-\left(k_{||}\right) = \frac{3-\sqrt{9\beta^2 + \left(1-\beta^2\right)|d\left(\tilde{\boldsymbol{\kappa}}\right)|^2}}{1-\beta^2}\,\,\,\text{and}\,\,\,\lambda_+\left(k_{||}\right) = \frac{3+\sqrt{9\beta^2 + \left(1-\beta^2\right)|d\left(\tilde{\boldsymbol{\kappa}}\right)|^2}}{1-\beta^2},
\end{align}
where $\left|d\left(\tilde{\boldsymbol{\kappa}}\right)\right| = \left|1+\left(-1\right)^n2\cos\left(\frac{k_{||}a^2}{2}\right)\right|$. It is clear that $\lambda_-\left(k_{||}\right) < \lambda_+\left(k_{||}\right)$.
\end{remark}
Our main result is stated in the following theorem.
\begin{theorem}\label{two_d_main_thm}(Existence of edge mode)
If two periodic systems with opposite $\beta$ values are connected as in Figure \ref{cell_numb}, then for each $k_{||} \in [0,\frac{2\pi}{a}]$, there exists an edge mode $(U_{p,q}^a(t),U_{p,q}^b(t))$ in the form (\ref{general_soln}) with $\omega^2\left(k_{||}\right) \in (\lambda_{-}\left(k_{||}\right), \lambda_{+}\left(k_{||}\right))$.
\end{theorem}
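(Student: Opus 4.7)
The plan is to reduce the 2D edge-mode problem to a 1D recurrence in the horizontal index $p$, parameterized by $k_{||}$, and then follow the transfer-matrix strategy used in Theorem~\ref{one_d_main_thm}. First I would substitute the ansatz (\ref{general_soln}) into the bulk equations (\ref{eqn_2d_left}) and (\ref{eqn_2d_right}); this replaces the two-dimensional coupling by $k_{||}$-dependent phase factors $1+e^{\pm i k_{||}}$ and produces, on each side of the interface, a second-order vector recurrence $A\mathbf{u}_{p-1}=B\mathbf{u}_p$ for $\mathbf{u}_p=(u^a_p,u^b_p)^T$. Inverting gives transfer matrices $T_R$ and $T_L$ whose entries depend on $\omega^2$, $k_{||}$, and the respective sign of $\beta$.

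Next I would analyze the spectra of $T_R$ and $T_L$. The matrices $A$ and $B$ turn out to have determinants that are complex conjugates of one another, hence $|\det T_{R/L}|=1$ and the two eigenvalues of each transfer matrix are reciprocal in modulus. For $\omega^2$ strictly inside the band gap $(\lambda_-(k_{||}),\lambda_+(k_{||}))$ described in Remark~\ref{opp_sign_beta}, no bulk Bloch mode with real horizontal wavenumber can satisfy the dispersion relation, so neither eigenvalue lies on the unit circle. Consequently exactly one eigenvalue, call it $\lambda_R$ (resp.\ $\lambda_L$), has modulus less than one, and decay as $|p|\to\infty$ forces $\mathbf{u}_1=c_1\mathbf{e}_R$ and $\mathbf{u}_{-1}=c_2\mathbf{e}_L$, where $\mathbf{e}_R,\mathbf{e}_L$ are the corresponding decaying eigenvectors.

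Then I would substitute the ansatz into the interface equations (\ref{eqn_2d_middle}) to produce two scalar relations involving $u^a_{\pm 1}$, $u^b_{\pm 1}$ and the phases $1+e^{\pm i k_{||}}$. Plugging $\mathbf{u}_{\pm 1}=c_{1,2}\mathbf{e}_{R,L}$ into these relations converts them into a homogeneous $2\times 2$ linear system in $(c_1,c_2)$, so nontrivial solvability is equivalent to a scalar dispersion equation $F(\omega^2,k_{||})=0$. To exhibit a root inside the open band gap for each $k_{||}$, I would try to exploit the symmetry $\beta_L=-\beta_R$ to produce either a distinguished explicit value of $\omega^2$ (analogous to $\Omega=0$ in the 1D proof, with the gap-center value $\omega^2=3/(1-\beta^2)$ being a natural candidate), or a sign change of $F$ between $\omega^2=\lambda_-(k_{||})$ and $\omega^2=\lambda_+(k_{||})$; in the latter case an intermediate-value argument finishes the proof.

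The principal obstacle will be this last step. Unlike in one dimension, the interface equations couple only the two $a$-mass displacements directly while the $b$-masses remain inside their respective half-lattices, and $\det T_{R/L}$ is a nontrivial unit complex number rather than $1$, so the eigenvectors $\mathbf{e}_R,\mathbf{e}_L$ carry $k_{||}$-dependent phases that must be tracked carefully through all of the algebra. Consequently $F(\omega^2,k_{||})$ does not collapse as cleanly as in the 1D case, and I expect most of the work to lie in carrying out the symmetry reduction $\beta_L=-\beta_R$ and then verifying, via either a special algebraic identity or a careful sign analysis at the two gap edges, that a real solution $\omega^2(k_{||})$ in the open interval $(\lambda_-(k_{||}),\lambda_+(k_{||}))$ exists uniformly in $k_{||}$.
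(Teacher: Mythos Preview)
Your high-level strategy---transfer matrices for each half-lattice, select the decaying eigenvector on each side, impose the interface equations to get a scalar dispersion relation $F(\omega^2,k_{||})=0$---is exactly the paper's approach, and your assessment of where the work lies is accurate. However, you have not identified the structural fact that makes the ``symmetry reduction $\beta_L=-\beta_R$'' go through, and your proposed endgame would not succeed as stated.

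The point you are missing is that with the interface geometry of Figure~\ref{gencase_2d_interface} (opposite $\beta$'s realized by reflecting the unit cell), the two transfer matrices are literal complex conjugates: $T_L=\overline{T_R}$. Hence $\lambda_L=\overline{\lambda_R}$ and $\mathbf{e}_L=\overline{\mathbf{e}_R}$, so the two interface equations become complex conjugates of one another and the $2\times2$ system for $(c_1,c_2)$ collapses to a single \emph{real} scalar equation. Writing $\sigma=3-(1-\beta)\omega^2$, $\tau=3-(1+\beta)\omega^2$, $z=1+e^{ik_{||}a^2}$ and $\xi=\tau\sigma-1-|z|^2$, the paper obtains
\[
4\sigma^2=\bigl(\xi+2\mp\sqrt{\xi^2-4|z|^2}\bigr)^2,
\]
which after squaring is a cubic in $\sigma$ with the trivial root $\sigma=0$; the two nontrivial roots are explicit, and repeating with the other sign of the square root gives two more. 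Of the four resulting candidate frequencies $\omega_j^2(k_{||})$, two are then checked by direct inequality to lie in $(\lambda_-(k_{||}),\lambda_+(k_{||}))$ for every $k_{||}$. No intermediate-value argument is used.

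Two specific corrections to your proposal: first, the gap-center guess $\omega^2=3/(1-\beta^2)$ is not a root of $F$ (the actual edge-mode frequencies depend on $|z|^2$ and involve $\sqrt{4\beta^2+(1-\beta^2)|z|^2}$ and $\sqrt{16\beta^2+(1-\beta^2)|z|^2}$); second, the phases you worry about are precisely what the conjugacy $T_L=\overline{T_R}$ eliminates, so once you see that symmetry the algebra is no worse than in one dimension.
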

\begin{remark}
By Lemma \ref{chern_lemma}, if $\beta_1$ and $\beta_2$ attain opposite signs, then $C_{K_j}^{(1)}C_{K_j}^{(2)}<0$ where $C_{K_j}^{(i)}$ is the valley Chern number calculated with $\beta_i$ at $K_j$.
\end{remark}

\subsubsection{Transfer Matrix for the Periodic System}

In this subsection, we compute the transfer matrices for the lattices on the left and right side of the interface. To simplify the calculations, we introduce the following notations:
\begin{align*}
\tau\left(k_{||}\right) &:= 3-(1+\beta)\omega^2\left(k_{||}\right),\\
\sigma\left(k_{||}\right) &:= 3 - (1-\beta)\omega^2\left(k_{||}\right),\\
z\left(k_{||}\right) &:= 1+e^{ik_{||}a^2},\\
\xi\left(\sigma,k_{||}\right) &:= \tau\sigma-1-|z|^2,
\end{align*}
We consider the solutions in the form of (\ref{general_soln}) for (\ref{eqn_2d_left})-(\ref{eqn_2d_middle}). For the right periodic system, by (\ref{eqn_2d_right}), $u^a_p$ and $u^b_p$ satisfy
\begin{equation}\label{2d_right_1}
\begin{aligned}
\tau u_p^a - u_{p-1}^b - u_{p}^b - u_{p}^be^{-ik_{||}a^2}&=0,\\
\sigma u_{p-1}^b - u_{p-1}^a - u_{p}^a - u_{p-1}^ae^{ik_{||}a^2}&=0,
\end{aligned}
\end{equation}
which is reduced to $A\left(k_{||}\right)\mathbf{u}_{p-1}=B\left(k_{||}\right)\mathbf{u}_p$ where
\begin{align*}
A\left(k_{||}\right) = \begin{bmatrix}
0 & 1 \\
-z\left(k_{||}\right) & \sigma
\end{bmatrix}
\,\,\,, 
B\left(k_{||}\right) = \begin{bmatrix}
\tau & -\overline{z\left(k_{||}\right)}\\
1 & 0
\end{bmatrix}
\,\,\,\text{and}\,\,\,
\mathbf{u}_p = \begin{bmatrix}
u_p^a\\
u_p^b
\end{bmatrix} 
= \begin{bmatrix}
u^a\\
u^b
\end{bmatrix}_p.
\end{align*}
We rewrite $A\mathbf{u}_{p-1}=B\mathbf{u}_p$ as
\begin{align*}
T_R\left(k_{||}\right)\mathbf{u}_{p-1} = \mathbf{u}_p,
\end{align*}
where the transfer matrix
\begin{align*}
T_R\left(k_{||}\right)= B^{-1}A = \frac{1}{\overline{z}}\begin{bmatrix}
-|z|^2 & \overline{z}\sigma\\
-z\tau & \tau\sigma-1
\end{bmatrix}.
\end{align*}
The eigenpairs of $T_R$ are 
\begin{align*}
\lambda_{R,\pm}\left(k_{||}\right) = \frac{1}{2\overline{z}}\left[\xi \pm \sqrt{\xi^2 - 4|z|^2}\right]\,\,\,\text{and}\,\,\,
\mathbf{v}_{R,\pm}\left(k_{||}\right) = \begin{bmatrix}
\sigma\\
\lambda_{R,\pm}+z
\end{bmatrix}.
\end{align*}
Similarly, by (\ref{eqn_2d_left}), $u^a_{-p}$ and $u^b_{-p}$ satisfy
\begin{align*}
\tau u_{-p}^a - u_{-p}^b - u_{-(p-1)}^b - e^{ik_{||}a^2}u_{-p}^b&=0,\\
\tau u_{-(p-1)}^b - u_{-(p-1)}^a - u_{-(p-1)}^a - e^{-ik_{||}a^2}u_{-p}^a&=0.
\end{align*}
We obtain $T_L\left(k_{||}\right) \mathbf{u}_{-p} = \mathbf{u}_{-(p+1)}$, where the transfer matrix
\begin{align*}
T_L\left(k_{||}\right) = \frac{1}{z}\begin{bmatrix}
-|z|^2 & z\sigma\\
-\overline{z}\tau & \tau\sigma-1
\end{bmatrix}.
\end{align*}
Since $T_R = \overline{T_L}$, the eigenpairs of $T_L$ are
\begin{align*}
\lambda_{L,\pm}\left(k_{||}\right) = \overline{\lambda_{R,\pm}\left(k_{||}\right)} = \frac{1}{2z}\left[\xi \pm \sqrt{\xi^2 - 4|z|^2}\right]\,\,\,\text{and}\,\,\,
\mathbf{v}_{L,\pm}\left(k_{||}\right) = \overline{\mathbf{v}_{R,\pm}\left(k_{||}\right)} = \begin{bmatrix}
\sigma\\
\lambda_{L,\pm}+\overline{z}
\end{bmatrix}.
\end{align*}
\begin{remark}\label{z_d_rel}
Along the interface, we have
\begin{align*}
|z|^2 = \left|1+ e^{ik_{||}a^2}\right|^2 = \left((-1)^{n+1}2\cos\left(\frac{k_{||}a^2}{2}\right)\right)^2,
\end{align*}
and $\left|d\right|^2 = \left(1-\left|z\right|\right)^2$.
\end{remark}
In order for $\mathbf{u}_p$ to decay as $p \to \pm\infty$, it is necessary that $\mathbf{u}_1 = \begin{bmatrix}
u^a\\
u^b
\end{bmatrix}_1$ and $\mathbf{u}_{-1} = \begin{bmatrix}
u^a\\
u^b
\end{bmatrix}_{-1}$ are parallel to the eigenvectors of $T_R$ and $T_L$ whose corresponding eigenvalues have absolute value less than 1.
\begin{remark}\label{evec_xi}
(i) If $\xi^2 - 4|z|^2 < 0$, then
\begin{align*}
\left|\lambda_R\right|^2 =\left|\lambda_L\right|^2 = \frac{1}{4|z|^2}\left|\xi \pm i \sqrt{4 |z|^2 - \xi^2}\right|^2 = 1.
\end{align*}
Since we consider $\left|\lambda_{L,R}\right|<1$, there holds $\xi^2 - 4|z|^2\geq 0$.\\
(ii) If $\left|\lambda_{R,+}\right|^2 < 1$, then
\begin{align*}
\xi\sqrt{\xi^2 - 4\left|z\right|^2} < 0,
\end{align*}
which implies $\xi< 0$. If $\left|\lambda_{R,-}\right|^2 < 1$, then
\begin{align*}
-\xi\sqrt{\xi^2 - 4\left|z\right|^2} < 0,
\end{align*}
which implies $\xi > 0$. 
\end{remark}
By Remark \ref{evec_xi}, the eigenvalue $\lambda_R$ with $|\lambda_R|<1$ is
\begin{align}\label{evalues}
\lambda_R = 
\begin{cases}
\frac{1}{2\overline{z}}\left(\xi - \sqrt{\xi^2 - 4\left|z\right|^2}\right),\,\,\,\text{for}\,\,\,\xi > 0,\\
\frac{1}{2\overline{z}}\left(\xi + \sqrt{\xi^2 - 4\left|z\right|^2}\right),\,\,\,\text{for}\,\,\,\xi < 0.
\end{cases} 
\end{align}
Since $\lambda_{L,\pm} = \overline{\lambda_{R,\pm}}$, we have 
\begin{align}\label{evalues_L}
\lambda_L = 
\begin{cases}
\frac{1}{2z}\left(\xi - \sqrt{\xi^2 - 4\left|z\right|^2}\right),\,\,\,\text{for}\,\,\,\xi > 0,\\
\frac{1}{2z}\left(\xi + \sqrt{\xi^2 - 4\left|z\right|^2}\right),\,\,\,\text{for}\,\,\,\xi < 0.
\end{cases} 
\end{align}
Thus the parallelism condition above implies that
\begin{align}\label{parallelvec}
\begin{bmatrix}
u_1^a\\
u_1^b
\end{bmatrix}
= c_1 \mathbf{v}_R
=
\begin{bmatrix}
c_1 \sigma\\
c_1 \left(\lambda_R+z\right)
\end{bmatrix}
\,\,\,\text{and} \,\,\,
\begin{bmatrix}
u_{-1}^a\\
u_{-1}^b
\end{bmatrix}
= c_2\mathbf{v}_L
=
\begin{bmatrix}
c_2\sigma\\
c_2\left(\lambda_L+\overline{z}\right)
\end{bmatrix}.
\end{align}

\subsubsection{Proof of Theorem~\ref{two_d_main_thm}}

By (\ref{eqn_2d_middle}) and (\ref{general_soln}), we obtain
\begin{align*}
m_a (-\omega^2)\frac{k}{m}u_1^a e^{i\omega\tau} e^{ik_{||}q} + k\left[u_1^a- u_{-1}^b\right]e^{i\omega\tau} e^{ik_{||}q}  + k\left[u_1^a - u_1^b\right]e^{i\omega\tau} e^{ik_{||}q} + k \left[u_1^a - u_1^be^{-ik_{||}}\right]e^{i\omega\tau} e^{ik_{||}q} &= 0,\\
m_a(-\omega^2)\frac{k}{m}u_{-1}^ae^{i\omega\tau} e^{ik_{||}q} + k\left[u_{-1}^a - u_1^a\right]e^{i\omega\tau} e^{ik_{||}q} + k\left[u_{-1}^a - u_{-1}^b\right]e^{i\omega\tau} e^{ik_{||}q} + k\left[u_{-1}^a - u_{-1}^be^{ik_{||}}\right]e^{i\omega\tau} e^{ik_{||}q} &= 0,
\end{align*}
which implies
\begin{equation}
\begin{aligned}\label{2d_middle_eqn}
\tau u_{1}^a - u_{-1}^a  -\overline{z}u_{1}^b&=0,\\
\tau u_{-1}^a - u_{1}^a -zu_{-1}^b&=0.
\end{aligned}
\end{equation}
Then (\ref{2d_middle_eqn}) can be simplified as
\begin{equation}
\begin{aligned} \label{c1c2}
\tau c_1\sigma - \overline{z} c_1 \left(\lambda_R z\right) - c_2 \sigma&=0,\\
\tau c_2\sigma - z c_2 \left(\lambda_L \overline{z}\right) - c_1 \sigma&=0.
\end{aligned}
\end{equation}
For $\sigma = 0$, the first equation in (\ref{c1c2}) implies that 
\begin{align*}
c_1\lambda_R|z|^2 = 0.
\end{align*}
which is equivalent to $c_1=0$ and it gives the trivial solution for (\ref{2d_right_1}). The first equation in (\ref{c1c2}) implies that, for $\sigma\neq0$, 
\begin{align}\label{firsteqn}
c_2 = \frac{1}{\sigma} c_1 \left(\tau\sigma - \overline{z}\left(\lambda_R+z\right)\right),
\end{align}
(\ref{firsteqn}) and the second equation in (\ref{c1c2}) together imply that
\begin{align*}
 c_1\sigma &= \frac{1}{\sigma}c_1\left(\tau\sigma - \overline{z}\left(\lambda_R+z\right)\right)\left(\tau\sigma - z\left(\lambda_L+\overline{z}\right)\right),
\end{align*}
which is equivalent to
\begin{equation}
\begin{aligned}\label{maineqn}
4\sigma^2 = 
\begin{cases}
\left|\xi + 2 - \sqrt{\xi^2 - 4\left|z\right|^2}\right|^2,\,\,\,\text{for}\,\,\,\xi < 0,\\
\left|\xi + 2 + \sqrt{\xi^2 - 4\left|z\right|^2}\right|^2,\,\,\,\text{for}\,\,\,\xi > 0,
\end{cases}
\end{aligned}
\end{equation}
by (\ref{evalues}). We consider $\xi < 0$ and $\xi > 0$ respectively to obtain a solution to (\ref{maineqn}). We have, if $\xi < 0$,
\begin{align*}
2\sigma = \xi+2-\sqrt{\xi^2 - 4\left|z\right|^2}\,\,\,\text{or}\,\,\,2\sigma = -\left(\xi+2\right)+\sqrt{\xi^2 - 4\left|z\right|^2}.
\end{align*}
$(i)$ If $2\sigma = \xi+2-\sqrt{\xi^2 - 4\left|z\right|^2}$, then
\begin{align*}
4\sigma^2 - 4\sigma\left(\xi+2\right) + \xi^2 + 4\xi+4 = \xi^2 - 4\left|z\right|^2,
\end{align*}
which implies
\begin{align}\label{lasteqn}
- \frac{1+\beta}{1-\beta}\sigma^3 + \left[1+\frac{6\beta}{1-\beta} + \frac{6\beta}{1-\beta} \right] \sigma^2 -\sigma\left[1-\left|z\right|^2 + \frac{6\beta}{1-\beta} \right]= 0.
\end{align}
The above equation attains two nonzero roots:
\begin{align*}
\sigma_1 = \frac{1+3\beta - \sqrt{4\beta^2 + \left(1-\beta^2\right)\left|z\right|^2}}{\left(1+\beta\right)},\,\,\, \sigma_2 = \frac{1+3\beta + \sqrt{4\beta^2 + \left(1-\beta^2\right)\left|z\right|^2}}{\left(1+\beta\right)}.
\end{align*}
Note that, for $\beta \in (-1,1)$,
\begin{align*}
0 > \xi(\sigma_2,k_{||}) \geq 4 \frac{\left|\beta\right| -\beta^2}{1-\beta^2} > 0,
\end{align*}
which is a contradiction to $\xi<0$. For $\sigma_1$,
\begin{align*}
\xi\left(\sigma_1,k_{||}\right) = \frac{1+\beta}{1-\beta}\sigma^2 - \frac{6\beta}{1-\beta} \sigma - \left|z\right|^2 - 1 = -\frac{4\beta^2 + 2\sqrt{4\beta^2 + \left(1-\beta^2\right)|z|^2}}{1-\beta^2} < 0.
\end{align*}
Therefore,
\begin{align*}
\tilde\sigma_1 = \frac{1+3\beta - \sqrt{4\beta^2 + \left(1-\beta^2\right)\left|z\right|^2}}{\left(1+\beta\right)}
\end{align*}
is one root of (\ref{lasteqn}).\\
$(ii)$ If $2\sigma = -\left(\xi+2\right)+\sqrt{\xi^2 - 4\left|z\right|^2}$, by similar calculations, we obtain
two more roots
\begin{align*}
\sigma_3 = \frac{-1+3\beta + \sqrt{16\beta^2 + \left(1-\beta^2\right)\left|z\right|^2}}{1+\beta},\,\,\,\text{and} \,\,\,\sigma_4 = \frac{-1+3\beta - \sqrt{16\beta^2 + \left(1-\beta^2\right)\left|z\right|^2}}{1+\beta}.
\end{align*}
Similarly, we have
\begin{align*}
0 &> \xi\left(\sigma_4,k_{||}\right) = 2\frac{4\beta^2 + 4\left|\beta\right|}{1-\beta^2}\geq 0,
\end{align*}
\begin{align*}
\xi\left(\sigma_3,k_{||}\right) = \frac{8\beta^2 - 2\sqrt{16\beta^2 + \left(1-\beta^2\right)\left|z\right|^2}}{1-\beta^2} \leq 8\frac{\beta^2-|\beta|}{1-\beta^2} \leq 0.
\end{align*}
Thus, we obtain
\begin{align*}
\tilde\sigma_2 = \frac{-1+3\beta + \sqrt{16\beta^2 + \left(1-\beta^2\right)\left|z\right|^2}}{1+\beta},
\end{align*}
as another root of (\ref{maineqn}).

By similar calculations for $\xi>0$, we obtain two more roots:
\begin{align*}
\tilde\sigma_3 = \frac{1+3\beta+\sqrt{4\beta^2 + \left(1-\beta^2\right)|z|^2}}{1+\beta}\,\,\,\text{and}\,\,\,\tilde\sigma_4 = \frac{-1+3\beta - \sqrt{16\beta^2 + \left(1-\beta^2\right)|z|^2}}{1+\beta}.
\end{align*}

From the relation $\sigma = 3 - \left(1+\beta\right)\omega^2$, we obtain the corresponding eigenvalues: 
\begin{equation}
\begin{aligned}\label{omegas_2d}
\omega_1^2\left(k_{||}\right) &=  \frac{2+\sqrt{4\beta^2 + \left(1-\beta^2\right)\left|z\right|^2}}{1-\beta^2},\\
\omega_2^2\left(k_{||}\right) &= \frac{4-\sqrt{16\beta^2 + \left(1-\beta^2\right)\left|z\right|^2}}{1-\beta^2},\\
\omega_3^2\left(k_{||}\right) &= \frac{2-\sqrt{4\beta^2 + \left(1-\beta^2\right)\left|z\right|^2}}{1-\beta^2},\\
\omega_4^2\left(k_{||}\right) &= \frac{4+\sqrt{16\beta^2 + \left(1-\beta^2\right)\left|z\right|^2}}{1-\beta^2}.
\end{aligned}
\end{equation}

Next we show that $\omega_j^2\left(k_{||}\right) \in \left(\lambda_-\left(k_{||}\right),\lambda_+\left(k_{||}\right)\right)$ for $j = 1,2$ but $\omega_j^2\left(k_{||}\right) \notin \left(\lambda_-\left(k_{||}\right),\lambda_+\left(k_{||}\right)\right)$ for $j=3,4$. For $\omega_1^2\left(k_{||}\right)$, we have
\begin{align*}
w_1^2 - \lambda_- &=\frac{2+\sqrt{4\beta^2 + \left(1-\beta^2\right)|z|^2}}{1-\beta^2}-\frac{3 - \sqrt{9\beta^2 + \left(1-\beta^2\right)|d|^2}}{1-\beta^2}\\
&\geq \frac{1}{1-\beta^2}\left[-1+\sqrt{1+2\left[4\beta^2 + \left(1-\beta^2\right)|z|^2\right]}\right] \geq 0.
\end{align*}
In the above, we have used Remark (\ref{z_d_rel}) to relate $z$ and $d$. Thus, $\omega_1^2 \geq \lambda_-$. Similarly,
\begin{align*}
w_1^2 - \lambda_+ &= \frac{2+\sqrt{4\beta^2 + \left(1-\beta^2\right)|z|^2}}{1-\beta^2}-\frac{3 + \sqrt{9\beta^2 + \left(1-\beta^2\right)|d|^2}}{1-\beta^2}\\
&\leq \frac{1}{1-\beta^2}\left[\sqrt{4\beta^2+\left(1-\beta^2\right)|z|^2} - 1 - \sqrt{4\beta^2+\left(1-\beta^2\right)|z|^2 + 4}\right] \leq 0.
\end{align*}
Thus, $\omega_1^2 \leq \lambda_+$ and we have $\omega_1^2 \in \left[\lambda_-,\lambda_+\right]$. By similar calculations, we can show $\omega_2^2 \in \left[\lambda_-,\lambda_+\right]$. However, if $\omega_3^2 \geq \lambda_-$, then
\begin{align*}
&\frac{2-\sqrt{4\beta^2 + \left(1-\beta^2\right)\left|z\right|^2}}{1-\beta^2} \geq \frac{3 - \sqrt{9\beta^2 + \left(1-\beta^2\right)\left|d\right|^2}}{1-\beta^2},
\end{align*}
which can be simplified as
\begin{align*}
 -\beta^2 \left(1-\beta^2\right) \left(\left|z\right| - 2\right)^2 \geq 0,
\end{align*}
which is impossible. Thus, $\omega_3^2 < \lambda_-$ and $\omega_3^2 \notin \left[\lambda_-,\lambda_+\right]$. Similarly, we can show $\omega_4^2 > \lambda_+$ and $\omega_4^2 \notin \left[\lambda_-,\lambda_+\right]$.
Figure \ref{bands_omegas} shows that $\omega_1^2\left(k_{||}\right)$ and $\omega_2^2\left(k_{||}\right)$ located in the band gap $\left(\lambda_-\left(k_{||}\right),\lambda_+\left(k_{||}\right)\right)$ when $a = 1$ and $\beta=0.1$.
\begin{figure}[h]
\centering
\includegraphics[scale=0.3]{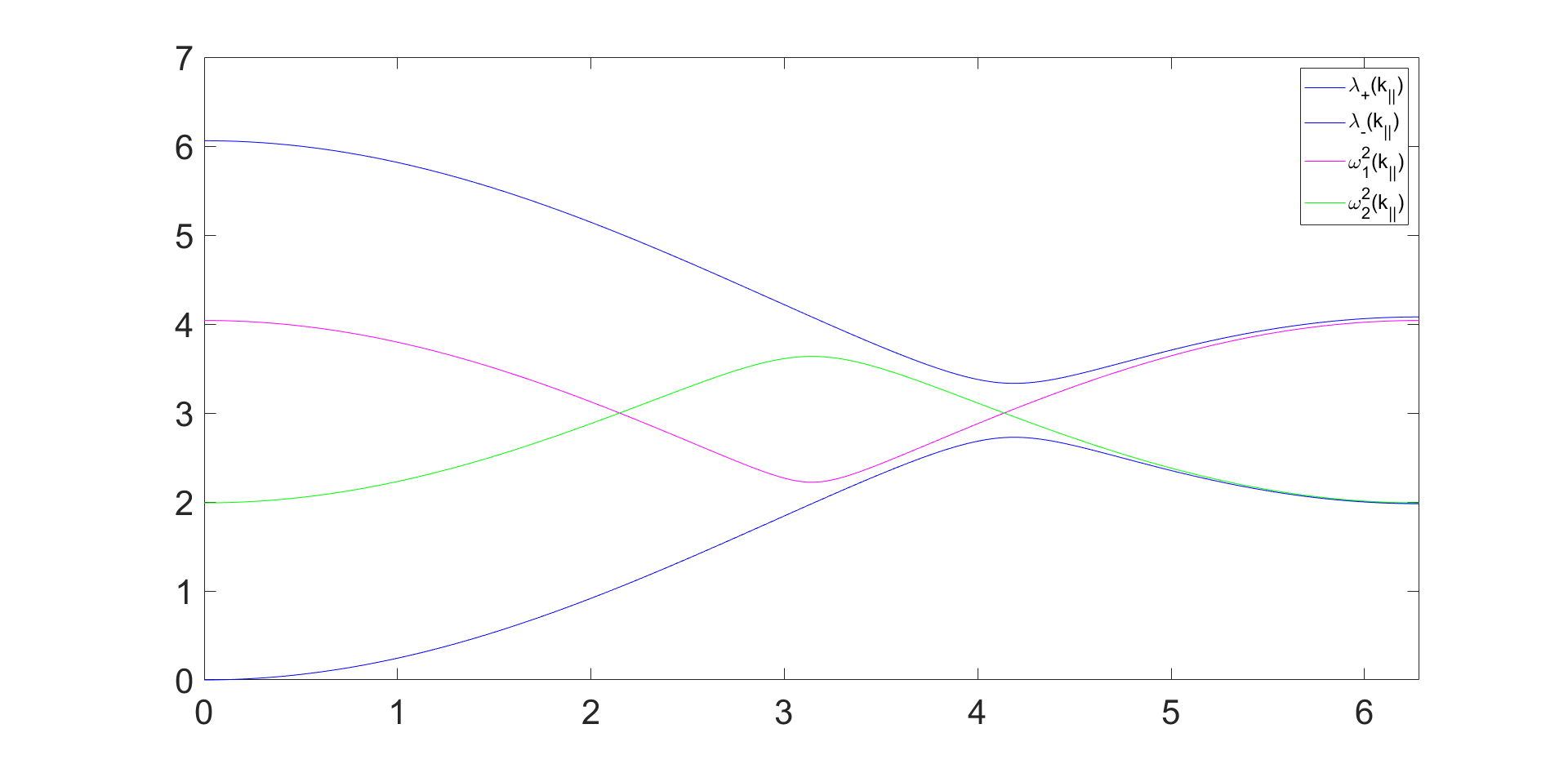}
\caption{The upper and lower bands $\lambda_\pm\left(k_{||}\right)$, and the eigenvalues of the edge modes, $\omega_1^2\left(k_{||}\right)$ and $\omega_2^2\left(k_{||}\right)$, when $a=1$ and $\beta = 0.1$.}
\label{bands_omegas}
\end{figure}

\section*{Acknowledgements} The work of R. Ozdemir and J. Lin is partially supported the NSF grant DMS-2011148.
\pagebreak

\bibliography{references.bib}
\bibliographystyle{ieeetr}

\end{document}